\newtheorem{theorem}{Theorem}
\newtheorem{lemma}{Lemma}
\newtheorem{corollary}{Corollary}
\newtheorem{definition}{Definition}
\newtheorem{remark}{Remark}
\def\hY{\hat{Y}}
\def\tR{\tilde{R}}
\def\n{\nonumber\\}
\def\be{\begin{equation}}
\def\ee{\end{equation}}
\def\bes{\begin{equation*}}
\def\ees{\end{equation*}}
\def\ma{{\mathcal A}}
\def\mb{{\mathcal B}}
\def\mf{{\mathcal F}}
\def\mm{{\mathcal M}}
\def\mr{{\mathcal R}}
\def\ms{{\mathcal S}}
\def\mw{{\mathcal W}}
\def\mx{{\mathcal X}}
\def\my{{\mathcal Y}}
\def\F{\mathbf{F}}
\def\e{\mathbb{E}}
\def\tv#1{\left\|#1\right\|_1}
\def\apx#1{\stackrel{#1}{\approx}}
\def\F{F_{[1:r]}}
\def\ta{\mathsf{T}_1}
\def\tb{\mathsf{T}_2}
\def\t#1{\mathsf{T}_{(#1)_2}}
\providecommand{\keywords}[1]{\textbf{{Index terms---}} #1}
\begin{document}

\date{}
\author{Mohammad~Hossein~Yassaee, Amin~Gohari, Mohammad~Reza~Aref
\thanks{\scriptsize \noindent 
The authors are with the Information Systems and Security Lab (ISSL), Department of Electrical Engineering, Sharif University of Technology, Tehran, Iran (e-mails: yassaee@ee.sharif.edu; \{aminzadeh,aref\}@sharif.edu). This work was supported by Iran-NSF under grant No. 92-32575. This paper was presented in part at ISIT 2012.}
}

\title{Channel Simulation via Interactive Communications}

\maketitle
\begin{abstract}
In this paper, we study the problem of channel simulation via interactive communication, known as the coordination capacity, in a two-terminal network. We assume that two terminals observe i.i.d.\ copies of two random variables and would like to generate i.i.d.\ copies of two other random variables jointly distributed with the observed random variables. The terminals are provided with two-way communication links, and shared common randomness, all at limited rates. Two special cases of this problem are the interactive function computation studied by Ma and Ishwar, and the tradeoff curve between one-way communication and shared randomness studied by Cuff. The latter work had inspired Gohari and Anantharam  to study the general problem of channel simulation via interactive communication stated above. However only inner and outer bounds for the special case of no shared randomness were obtained in their work. In this paper we settle this problem by providing an exact computable characterization of the multi-round problem. To show this we employ the technique of ``output statistics of random binning" that has been recently developed by the authors.

\end{abstract}
\keywords{Channel simulation, interactive communications, coordination, approximation, random binning.}
\section{Introduction}
The minimum amount of interaction needed to create dependent random variables is an operational way to quantify the correlation among random variables. Wyner considered
the problem of remote reconstruction of two dependent random variables by two terminals which are provided with shared randomness at a limited rate \cite{Wyner}. He used this approach to measure the intrinsic common randomness between two random variables. An alternative characterization of Wyner's common information as an extreme point of a channel simulation problem was provided in \cite{cuff,cuff-trans}. In this setup, a terminal  observing i.i.d.\ copies of $X$, sends a message at rate $R_1$ to a remote random number generator (decoder) that produces i.i.d.\ copies of another random variable $Y$ which is jointly distributed with $X$. The total variation distance between the achieved joint distribution and the i.i.d.\ distribution induced by passing $X$ through a discrete memoryless channel (DMC) channel $p(y|x)$ should be negligible. In other words, the generated distribution and the i.i.d.\ distribution should  statistically be indistinguishable. Shared common randomness exists between the two parties at a limited rate $R_0$. Cuff found the tradeoff between $R_0$ and $R_1$ showing that when $R_0=0$ the minimum admissible rate for $R_1$ is the Wyner's common information; and when $R_0=\infty$, the minimum admissible rate for $R_1$ is the mutual information between $X$ and $Y$ (this special case was already shown in \cite{Bennett}). 

This setup was  generalized in \cite{aminzade} by assuming that two terminals have access to i.i.d. copies of $X_1$ and $X_2$ respectively and would like to generate i.i.d.\ copies of $Y_1$ and $Y_2$. Instead of a one-way communication, now the terminals are provided with a two-way communication at rates $R_{12}$ and $R_{21}$ (see Fig. 1). They can use up these two resources in $r$ rounds of interactive communications as they wish (i.e. we only impose the constraint that $\sum_{i~odd} H(C_i)$ is less than or equal to $nR_{12}$ where $H(C_i)$ is the entropy of the message sent from terminal 1 to terminal 2 at round $i$; a similar statement holds for $R_{21}$). Inner and outer bounds on $R_{12}$ and $R_{21}$ were derived in the special case of no shared common randomness \cite{aminzade}. In this paper we completely solve this problem under both the \emph{strong} and \emph{empirical} coordination models. Strong coordination demands a total variation converging to zero. On the other hand, empirical coordination only demands closeness of the \emph{empirical distribution} of the generated random variables and the i.i.d.\ ones \cite{coordination} (See Section \ref{sec:II} for a detailed description of these two models).
 %
\begin{figure}[t]
\centering
\tikzstyle{enc}=[rectangle,
                                    thick,
                                    minimum height=5cm,
                                    minimum width=2cm,
                                    draw=blue!80,
                                    fill=blue!20,scale=.5]

\tikzstyle{background}=[rectangle,
                                                fill=yellow!20,
                                                inner sep=0.2cm,
                                                rounded corners=5mm,
                                               scale=.5 ]
\begin{tikzpicture}[>=stealth,scale=.5]
\node (e1) [enc] {\Large{$\mbox{Enc}_1/\mbox{Dec}_1$}};
\path (e1.180)+(8,0) node[enc] (e2) {\Large{$\mbox{Enc}_2/\mbox{Dec}_2$}};
\path (e2.north west)+(0,-.2) node[coordinate](c1){};
\path (e2.north west)+(0,-1.2) node[coordinate](c2){};
\path (e2.north west)+(0,-2.2) node[coordinate](c3){};
\path (e2.south west)+(0,.2) node[coordinate](cr){};
\path [draw,->] (e1.north east)+(0,-.2)--node[above,scale=.7](d0){\textcolor{black}{$C_1$}} (c1);
\path [draw,<-] (e1.north east)+(0,-1.2)--node[above,scale=.7]{\textcolor{black}{$C_2$}} (c2);
\path [draw,->] (e1.north east)+(0,-2.2)--node[above,scale=.7](d1){\textcolor{black}{$C_3$}} (c3);
\path [draw,<-] (e1.south east)+(0,.2)--node[above,scale=.7](dr){\textcolor{black}{$C_r$}}(cr);
\path [draw,densely dashed] (d1)+(0,-.7) -- (dr);
\path (d0)+(0,1.25) node[scale=.8](sim){Channel Simulator: $q(y_1y_2|x_1x_2)$};
\path (d0)+(0,.6) node[scale=.8]{with $r$ communication rounds};
\path (dr)+(0,-2) node[](o){$nR_0$ bits};
\draw[->] (o)--(e1.south east);
\draw[->] (o)--(e2.south west);
\path (e1.north west)+(-1,-1) node[coordinate](s1){};
\path (s1)+(-2,0) node[](x1){$X_1^n$};
\path (e1.south west)+(-1,1) node[coordinate](r1){};
\path (r1)+(-2,0) node[](y1){$Y_1^n$};
\path (e2.north east)+(1,-1) node[coordinate](s2){};
\path (s2)+(2,0) node[](x2){$X_2^n$};
\path (e2.south east)+(1,1) node[coordinate](r2){};
\path (r2)+(2,0) node[](y2){$Y_2^n$};
 \begin{pgfonlayer}{background}
        \node [background,
                    fit=(s1) (sim) (r2) (o),
                    ] (back){};
    \end{pgfonlayer}
    \draw[->] (x1)--(x1-|back.west);   
   \draw[->] (x1-|back.west)--(x1-|e1.west);
    \draw[<-] (y1)--(y1-|back.west);
   \draw[<-] (y1-|back.west)--(y1-|e1.west);
    \draw[->] (x2)--(x2-|back.east);   
   \draw[->] (x2-|back.east)--(x2-|e2.east);
    \draw[<-] (y2)--(y2-|back.east);
   \draw[<-] (y2-|back.east)--(y2-|e2.east);
\end{tikzpicture}
\caption{Channel simulator model: collective forward and backward rates satisfy $nR_{12}\ge \sum_{i:odd}H(C_i)$ and $nR_{21}\ge \sum_{i:even}H(C_i)$ respectively. }
\vspace{-.5cm}
\end{figure}
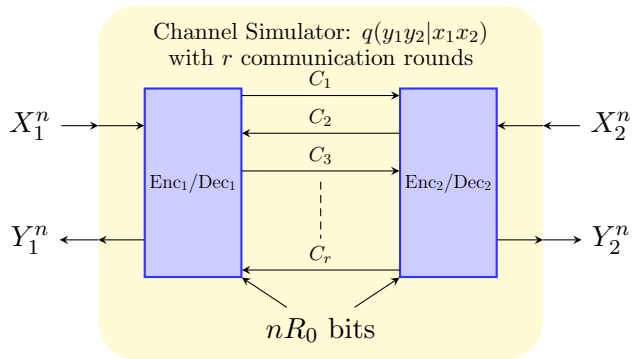
Our result relates to the literature of coordinating distributed controllers to carry out some joint action (see e.g. \cite{coordination,Venkat}) since the generated random variables can be thought of as coordinated actions. Also, our result has implications in quantum information theory. Finding the communication cost of simulating non-local correlations has been subject to many studies where the goal is to simulate an arbitrary bipartite box $p(y_1,y_2|x_1, x_2)$. 
Our result in this paper implies an asymptotic information theoretic characterization of the communication cost (that serves as a lower bound to the one-shot communication complexity formulation whose characterization remains an open problem; e.g. see \cite{Harsha}). As a future work along these lines, it would be interesting to find the entanglement assisted version of our results, similar to the extensions of \cite{cuff} in \cite{Bennett2}.

Lastly we would like to point out that our work falls into the category of strong coordination problems, which has been popularized by Cuff. See \cite{secure}-\cite{farzin13} for some recent works on strong coordination.

This paper is organized as follows: in the next subsection we describe the main proof technique at an intuitive level. In Section \ref{sec:II} we define the problem and in Section \ref{sec:III} we state the main results followed by proofs in Sections \ref{sec:IV} and \ref{sec:V}.
\par\textbf{Notation}: In this paper, 
we use $X_{\ms}$ to denote $(X_j:j\in\ms)$.
we use $p^U_{\ma}$ to denote the uniform distribution over the set $\ma$ and $p(x^n)$ to denote the i.i.d. pmf $\prod_{i=1}^np(x_i)$, unless otherwise stated. The total variation between two pmf's $p$ and $q$ on the same alphabet $\mx$ , is denoted by $\tv{p(x)-q(x)}$. 

\begin{remark} Similar to \cite{cuff-trans} in this work we frequently use the concept of \emph{random} pmfs, which we denote by capital letters (e.g. $P_X$). For any countable set $\mx$ let $\Delta^{\mx}$ be the probability simplex for distributions on $\mx$. A random pmf $P_X$ is a probability distribution over $\Delta^{\mx}$. In other words, if we use $\Omega$ to denote the sample space, the mapping $\omega\in \Omega \mapsto P_X(x;\omega)$ is a random variable for all $x\in\mx$ such that $P_X(x;\omega)\geq 0$ and $\sum_{x}P_X(x;\omega)=1$ for all $\omega$. Thus, $\omega\mapsto P_X(\cdot;\omega)$ is a vector of random variables, which we denote by $P_X$. We  define $P_{X,Y}$ on product set $\mx\times\my$ in a similar way. We note that we can continue to use the law of total probability with random pmfs (e.g. to write $P_X(x)=\sum_{y}P_{XY}(x,y)$ meaning that $P_X(x;\omega)=\sum_yP_{XY}(x,y;\omega)$ for all $\omega$) and the conditional probability pmfs (e.g. to write $P_{Y|X}(y|x)=\frac{P_{XY}(x,y)}{P_X(x)}$ meaning that $P_{Y|X}(y|x;\omega)=\frac{P_{XY}(x,y;\omega)}{P_X(x;\omega)}$ for all $\omega$).
\end{remark}

\section{Problem Statement}
\label{sec:II}
Two terminals observe i.i.d.\ copies of sources $X_1, X_2$ (taking values in finite sets $\mx_1$ and $\mx_2$ and having a joint pmf $q(x_1, x_2)$) respectively. A random variable $\omega$ which is independent of $X_{[1:2]}^n=X_1^nX_2^n$ and is uniformly distributed over $[1:2^{nR_0}]$ represents the \emph{common randomness} provided to the terminals. Given an arbitrary $r\in\mathbb{N}$, an
$(n, R_0, R_{12}, R_{21})$ channel simulation code for simulating a channel
with $r$ \emph{interactive rounds of communications}, consists of
\begin{itemize}
\item a set of $r$ randomized encodings specified with the conditional pmf's $\tilde{p}^{\mathsf{enc}_1}(c_i|c_{[1:i-1]}x_1^n\omega)$ for odd numbers $i\in[1:r]$ and $\tilde{p}^{\mathsf{enc}_2}(c_i|c_{[1:i-1]}x_2^n\omega)$ for even numbers $i\in[1:r]$, where $C_i$ denotes the communication of the $i$-th round,
\item  two randomized decoders $\tilde{p}^{\mathsf{dec}_1}(y_1^n|c_{[1:r]}x_1^n\omega)$ and $\tilde{p}^{\mathsf{dec}_2}(y_2^n|c_{[1:r]}x_2^n\omega)$,
\end{itemize}
such that
\begin{align*}\frac{1}{n}\sum_{i:odd}H(C_i)\le R_{12},~~~\frac{1}{n}\sum_{i:even} H(C_i)\le R_{21}.\end{align*}
\begin{definition}
Given a channel with transition probability $q(y_{[1:2]}|x_{[1:2]})$, a rate tuple $(R_0,R_{12},R_{21})$ is said to be achievable if there exists a sequence of $(n,R_0,R_{12},R_{21})$ channel simulation codes, such that
the total variation between the probability $\tilde{p}(y_{[1:2]}^n,x_{[1:2]}^n)$ induced by the code and the i.i.d.\ repetitions of the desired pmf $q(y_{[1:2]}|x_{[1:2]})q(x_{[1:2]})$ vanishes as $n$ goes to infinity, that is
\be
\label{eq:t}
\lim_{n\rightarrow\infty}\tv{\tilde{p}(y_{[1:2]}^nx_{[1:2]}^n)-\prod_{i=1}^n q(y_{[1:2],i}x_{[1:2],i})}=0.
\ee
\end{definition}
\begin{definition}
The simulation rate region is the closure of all the achievable rate tuples $(R_0,R_{12},R_{21})$.
\end{definition}
\begin{remark}
In the special case $r=1$, $\my_1=\mx_2=\emptyset$, our problem reduces to the one considered by Cuff in \cite{cuff-trans}.
\end{remark}
\begin{remark}
Observe that if $Y_1=f_1(X_{[1:2]})$ and $Y_2=f_2(X_{[1:2]})$ are deterministic functions, the total variation constraint of eq. \eqref{eq:t} reduces to
\begin{align*}\lim_{n\rightarrow \infty}\tilde{p}\left(Y_1^n=f_1(X_{[1:2]}^n), Y_2^n=f_2(X_{[1:2]}^n)\right) =1.\end{align*}
Thus our problem reduces to the problem of \emph{interactive function computation} considered in \cite{ma}.
\end{remark}
\begin{definition}[Empirical coordination \cite{coordination}]\label{rem}
Assume that instead of simulating the channel $q(y_{[1:2]}|x_{[1:2]})$, the demand is to find encoders and decoders such that the output sequences $Y_{[1:2]}^n$ are jointly typical with the inputs $X_{[1:2]}^n$, with high probability. In this case, condition \eqref{eq:t} should be replaced by the following condition:
\be
\label{eq:em}
\lim_{n\rightarrow\infty}\tilde{p}\left(\tv{\tilde{\mathbf{p}}_{X_{[1:2]}^nY_{[1:2]}^n}-q_{X_{[1:2]}Y_{[1:2]}}}>\epsilon\right)=0,\ee
where $\tilde{\mathbf{p}}_{X_{[1:2]}^nY_{[1:2]}^n}$ is the empirical distribution of the
pair $(X_{[1:2]}^n,Y_{[1:2]}^n)$ induced by the chosen code.
\end{definition}
\begin{remark}\label{re:cor}
 It can be shown that if a sequence of codes satisfies the channel simulation condition \eqref{eq:t}, then it also satisfies the empirical coordination constraint \eqref{eq:em}. On the other hand it was shown in \cite[Theorem 2]{coordination} that the empirical rate region does not depend on the amount of \emph{common randomness}, that is, if $(R_0,R_{12},R_{21})$ is achievable for empirical coordination,  then $(0,R_{12},R_{21})$ is also achievable.  These two facts imply that the achievability of a pair of $(R_{12},R_{21})$  for empirical coordination can be proved indirectly through the achievability proof for channel simulation in the presence of an unlimited common randomness. In \cite{coordination}, it was conjectured that this relation is two-sided, i.e. the rate regions for empirical coordination and channel simulation with unlimited common randomness are equal.
\end{remark}
\section{Main Results}
\label{sec:III}
\begin{theorem}[Channel Simulation] \label{thm:m}
The simulation rate region is the set $\ms(r)$ of all non-negative rate tuples $(R_0,R_{12},R_{21})$, for which there exists $p(f_1,\cdots,f_r,x_{[1:2]},y_{[1:2]})\in T(r)$ such that
\begin{align}
R_{12}&\ge I(X_1;\F|X_2),\n
R_{21}&\ge I(X_2;\F|X_1),\n
R_0+R_{12}&\ge I(X_1;\F|X_2)+I(F_1;Y_{[1:2]}|X_{[1:2]}),\n
R_0+R_{12}+R_{21}&\ge I(X_1;\F|X_2)+I(X_2;\F|X_1) +I(\F;Y_{[1:2]}|X_{[1:2]}),
\end{align}
where $T(r)$ is the set of $p(f_1,\cdots,f_r,x_{[1:2]},y_{[1:2]})$ satisfying
\begin{align}
X_{[1:2]}&,Y_{[1:2]}\sim q(x_{[1:2]})q(y_{[1:2]}|x_{[1:2]}),\n
F_i-&F_{[1:i-1]}X_1-X_2, \ \mbox{if $i$ is odd,} \n
F_i-&F_{[1:i-1]}X_2-X_1, \ \mbox{if $i$ is even,} \n
&Y_1-\F X_1-X_2Y_2,\n
&Y_2-\F X_2-X_1Y_1,\n
|\mf_1|&\le|\mx_1||\mx_2||\my_1||\my_2|+3,\n
\forall i>1:
|\mf_i|&\le|\mx_1||\mx_2||\my_1||\my_2|\prod_{j=1}^{i-1}|\mf_j|+2.
\end{align}
\end{theorem}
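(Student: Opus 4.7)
The plan is to split Theorem \ref{thm:m} into achievability and converse. Achievability will be established via the \emph{output statistics of random binning} (OSRB) technique developed by the authors, while the converse will follow from a standard identification of auxiliary random variables together with single-letterization by a time-sharing index and Carath\'eodory-type cardinality bounds.

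For the achievability, fix a joint pmf $p(f_{[1:r]}, x_{[1:2]}, y_{[1:2]}) \in T(r)$ and, in a random experiment, draw $n$ i.i.d.\ copies of the full tuple. For each $i \in [1:r]$, randomly bin $F_i^n$ into a pair of independent uniform indices $(B_i, K_i) \in [1:2^{nR_{b,i}}] \times [1:2^{nR_{k,i}}]$: the $B_i$'s will play the role of the transmitted messages $C_i$, while the collection $(K_1, \ldots, K_r)$ will play the role of the shared common randomness $\omega$. Two families of OSRB lemmas drive the analysis. The \emph{independence/uniformity} lemma states that if at round $i$ the combined binning rate is below $H(F_i \mid F_{[1:i-1]}, X_1)$ for odd $i$ (respectively $H(F_i \mid F_{[1:i-1]}, X_2)$ for even $i$), then the newly produced bins are approximately uniform and independent of the variables already present on the receiving side. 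The \emph{Slepian--Wolf} lemma states that as long as the rate of $B_i$ exceeds $H(F_i \mid F_{[1:i-1]}, X_j)$ on the decoding side $j$, the receiver can recover $F_i^n$ from $(F_{[1:i-1]}^n, X_j^n, B_i, K_i)$ with vanishing error. After all rounds, terminal $j$ samples $Y_j^n$ from $\prod_t p(y_{j,t} \mid f_{[1:r],t}, x_{j,t})$, and the Markov chains in $T(r)$ guarantee that the induced joint pmf is close in total variation to the $n$-fold i.i.d.\ target. Converting the binning protocol into an actual code uses the $K_i$'s as the true common randomness, and derandomization fixes a good binning. Setting $R_0 = \sum_i R_{k,i}$, $R_{12} = \sum_{i\text{ odd}} R_{b,i}$, $R_{21} = \sum_{i\text{ even}} R_{b,i}$, Fourier--Motzkin elimination on the two families of constraints should yield exactly the four stated inequalities.

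For the converse, I would identify the auxiliaries essentially as $F_{i,t} := (C_i, \omega, X_{1,[1:t-1]}, X_{2,[1:t-1]})$, modulo the adjustments required to satisfy the last two Markov chains in $T(r)$; the round-wise Markov chains are then immediate from the fact that $C_i$ depends only on what the sending terminal knows at round $i$. Starting from $nR_{12} \geq \sum_{i\text{ odd}} H(C_i) \geq \sum_{i\text{ odd}} H(C_i \mid C_{[1:i-1]}, X_2^n, \omega)$ and noting that for even $i$ the analogous term $I(C_i; X_1^n \mid C_{[1:i-1]}, X_2^n, \omega)$ vanishes (since $C_i$ is then a function of $(C_{[1:i-1]}, X_2^n, \omega)$), telescoping yields $nR_{12} \geq I(C_{[1:r]}; X_1^n \mid X_2^n, \omega)$; single-letterization with a time-sharing index $Q \sim \mathrm{Unif}[1:n]$ then gives $R_{12} \geq I(X_1; F_{[1:r]} \mid X_2)$. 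The remaining three rate bounds follow from analogous manipulations applied to $nR_{21}$, $n(R_0 + R_{12})$, and $n(R_0 + R_{12} + R_{21})$; the mutual information terms involving $Y_{[1:2]}$ arise from the extra entropy cost of producing $Y_1^n, Y_2^n$ at the decoders, which the total-variation condition forces to be paid from the common randomness plus communication budgets. Cardinality bounds are obtained by successive application of the Eggleston--Fenchel--Carath\'eodory lemma, preserving the target marginals, the relevant mutual information functionals, and the stated Markov relations.

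The main obstacle will be the careful management of the iterative OSRB analysis: the common randomness $(K_1, \ldots, K_r)$ must simultaneously satisfy the independence conditions at \emph{every} round, while the Slepian--Wolf and independence conditions interlock across rounds because each $K_i$ plays both the role of uniform randomness (pressuring its rate downward) and of side information for the opposite-side decoder (pressuring its rate upward). Closing the total-variation gap end-to-end rather than round by round requires chaining $O(r)$ approximations via the triangle inequality and deferring the derandomization of the bins to the very end. The converse will likewise need care so that the single identification of $F_{i,t}$ respects \emph{all} five Markov relations of $T(r)$, in particular the two chains linking $Y_{[1:2]}$ to $(F_{[1:r]}, X_{[1:2]})$, which come from the decoders being randomized functions of their local side information and the full transcript.
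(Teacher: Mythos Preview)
Your overall architecture (OSRB for achievability, auxiliary identification plus time-sharing for the converse) matches the paper, but the achievability plan has a structural gap that will prevent the Fourier--Motzkin step from producing the stated region.

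In the paper's scheme there are \emph{three} layers of bins, not two: to $F_1^n$ one assigns $(b_1,k_1,\omega)$ and to $F_{[1:i]}^n$ for $i\ge 2$ one assigns $(b_i,k_i)$. The $k_i$'s are the transmitted messages, $\omega$ is the genuine common randomness and is attached to $F_1$ \emph{only}, and the $b_i$'s are an \emph{extra} layer of fictitious shared randomness that must be eliminated at the end. The elimination step requires $B_{[1:r]}$ to be nearly independent of $(X_{[1:2]}^n,Y_{[1:2]}^n)$, which yields the constraints $\sum_{t\le i}\tilde R_t< H(F_{[1:i]}\mid X_{[1:2]}Y_{[1:2]})$; it is exactly these constraints, after FME, that produce the terms $I(F_1;Y_{[1:2]}\mid X_{[1:2]})$ and $I(F_{[1:r]};Y_{[1:2]}\mid X_{[1:2]})$ in the third and fourth inequalities. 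Your two-bin scheme, with $B_i$ as messages and $(K_1,\dots,K_r)$ as the common randomness, has no analogue of this elimination step, so your FME will only recover the first two inequalities and not the $R_0$-dependent ones. Relatedly, setting $R_0=\sum_i R_{k,i}$ spreads common randomness over all rounds, whereas the asymmetric constraint $R_0+R_{12}\ge I(X_1;F_{[1:r]}\mid X_2)+I(F_1;Y_{[1:2]}\mid X_{[1:2]})$ arises precisely because $\omega$ is bound to $F_1$ alone (this is also why the region is not symmetric for finite $r$, as the paper discusses). You need the third bin layer and the single $\omega$ on round one.

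On the converse, your tentative identification $F_{i,t}=(C_i,\omega,X_{1,[1:t-1]},X_{2,[1:t-1]})$ uses the past of both sources; the paper instead takes the \emph{future} of $X_1$ and the past of $X_2$, i.e.\ $F_i=(\omega,C_i,X_1^{Q+1:n},X_2^{1:Q-1},Q)$. This mismatched past/future choice is what makes all the interactive Markov chains $F_i-F_{[1:i-1]}X_{(i)_2}-X_{(i+1)_2}$ go through simultaneously (via an inductive lemma of the form $I(X_1^{1:q-1};X_2^{q:n}\mid \omega C_{[1:i]}X_1^{q:n}X_2^{1:q-1})=0$); the symmetric-past choice does not. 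Also, because the code only achieves total variation $\epsilon$ rather than exact i.i.d., the single-letterization of the $Y_{[1:2]}$-terms needs continuity lemmas (bounding $\sum_q I(Y_{[1:2]}^{1:q-1};Y_{[1:2],q}\mid X_{[1:2]}^n)$ and $I(Q;Y_{[1:2],Q}\mid X_{[1:2],Q})$ by $O(g(\epsilon))$) followed by an $\epsilon\to 0$ compactness argument; your sketch omits this, and without it the last two inequalities do not follow.
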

\begin{corollary}[Interactive function computation \cite{ma}]
Assume that the desired channel is deterministic, that is, $Y_1=f_1(X_{[1:2]})$ and $Y_2=f_2(X_{[1:2]})$. Setting $R_0=0$ in Theorem \ref{thm:m} gives the following full characterization of the rate region of reliable interactive computation,
\bes\begin{split}
\mr(r)=\{\exists \F:&R_{12}\ge I(X_1;\F|X_2)\\
&R_{21}\ge I(X_2;\F|X_1)\\
&F_i-F_{[1:i-1]}X_1-X_2, \ \mbox{if $i$ is odd,} \\
&F_i-F_{[1:i-1]}X_2-X_1, \ \mbox{if $i$ is even,} \\
&H(Y_2|\F X_2)=H(Y_1|\F X_1)=0\}.
\end{split}\ees
\end{corollary}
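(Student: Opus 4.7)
The plan is to deduce the corollary by specializing Theorem \ref{thm:m} to the deterministic case with $R_0 = 0$, and then checking that every ingredient of the region $\ms(r)$ either reduces to one of the displayed constraints or becomes vacuous.

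First, I would use the constraint $X_{[1:2]}, Y_{[1:2]} \sim q(x_{[1:2]})q(y_{[1:2]}|x_{[1:2]})$ with $Y_j = f_j(X_{[1:2]})$ to conclude $H(Y_{[1:2]} \mid X_{[1:2]}) = 0$, and therefore $I(\F; Y_{[1:2]} \mid X_{[1:2]}) = 0$. Substituting into the third and fourth rate inequalities of Theorem \ref{thm:m} with $R_0 = 0$ gives $R_{12} \ge I(X_1;\F|X_2)$ and $R_{12}+R_{21} \ge I(X_1;\F|X_2) + I(X_2;\F|X_1)$ respectively; both are implied by the first two marginal rate bounds, hence they can be dropped. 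So only the constraints $R_{12} \ge I(X_1;\F|X_2)$ and $R_{21} \ge I(X_2;\F|X_1)$ survive, which match those in $\mr(r)$.

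Next I would verify the equivalence between the Markov conditions on $Y_1,Y_2$ in $T(r)$ and the zero-entropy conditions of $\mr(r)$. For the forward direction, fix any distribution in $T(r)$ with the stated Markov chains. Since $Y_1 = f_1(X_1,X_2)$ is a deterministic function, the Markov chain $Y_1 - \F X_1 - X_2$ forces $f_1(X_1, X_2)$ to be measurable with respect to $(\F, X_1)$ on every $(\F, X_1)$-atom of positive probability, i.e., $H(Y_1 \mid \F X_1) = 0$; the symmetric argument gives $H(Y_2 \mid \F X_2) = 0$. Conversely, if the distribution satisfies the Markov chain structure on the $F_i$'s together with $H(Y_1 \mid \F X_1) = H(Y_2 \mid \F X_2) = 0$, then $Y_1$ is a function of $(\F, X_1)$ and $Y_2$ of $(\F, X_2)$, so the long Markov chains $Y_1 - \F X_1 - X_2 Y_2$ and $Y_2 - \F X_2 - X_1 Y_1$ hold automatically, and the joint with $X_{[1:2]}$ is $q(x_{[1:2]})q(y_{[1:2]}|x_{[1:2]})$ as required.

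Finally, the cardinality bounds in Theorem \ref{thm:m} are not part of the corollary's statement, so they can simply be inherited without further discussion. I do not anticipate a serious obstacle here: the only subtle point is the forward implication in the second paragraph, where one must argue that a deterministic function that is conditionally independent of one of its arguments given a side variable is actually a function of the side variable alone — a short measure-theoretic observation on the support of $p(\F, X_1, X_2)$.
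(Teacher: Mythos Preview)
Your proposal is correct and follows exactly the route the paper intends: the paper states the corollary as an immediate specialization of Theorem \ref{thm:m} with $R_0=0$ and gives no further argument, so the details you supply---that $I(\F;Y_{[1:2]}|X_{[1:2]})=0$ kills the extra rate constraints, and that the Markov chains $Y_j-\F X_j-X_{3-j}Y_{3-j}$ collapse to $H(Y_j|\F X_j)=0$ when $Y_j=f_j(X_1,X_2)$---are precisely what the reader is expected to fill in. Your ``subtle point'' is handled by the one-line identity $H(Y_1|\F X_1)=I(Y_1;X_2|\F X_1)+H(Y_1|\F X_1X_2)=0+0$, so no measure-theoretic detour is needed.
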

\begin{theorem}[Empirical coordination]
\label{thm:emp}
The empirical coordination rate region is the set of all non-negative rate pairs $(R_{12},R_{21})$, for which there exists $p(f_1,\cdots,f_r,x_{[1:2]},y_{[1:2]})\in T(r)$ (defined in Theorem \ref{thm:m}) such that
\begin{align}
R_{12}&\ge I(X_1;\F|X_2),\n
R_{21}&\ge I(X_2;\F|X_1).
\end{align}
\end{theorem}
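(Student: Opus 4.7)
The plan is to derive achievability from Theorem~\ref{thm:m} via Remark~\ref{re:cor}, and to handle the converse through an interactive cut-set bound followed by a classical single-letterization. For achievability, fix any $p(f_1,\ldots,f_r,x_{[1:2]},y_{[1:2]})\in T(r)$ with $R_{12}\ge I(X_1;\F|X_2)$ and $R_{21}\ge I(X_2;\F|X_1)$. For $R_0$ sufficiently large, the two extra inequalities defining $\ms(r)$ in Theorem~\ref{thm:m} become vacuous, so $(R_0,R_{12},R_{21})\in\ms(r)$ and is achievable for channel simulation. By the first half of Remark~\ref{re:cor}, strong channel simulation implies empirical coordination \eqref{eq:em}, and by the second half the common randomness can be dropped without shrinking the empirical rate region. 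Thus $(R_{12},R_{21})$ is achievable for empirical coordination.

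For the converse, consider any $(n,R_0,R_{12},R_{21})$ code with messages $C_1,\ldots,C_r$ and common randomness $\omega$ satisfying \eqref{eq:em}. Since $C_i$ for even $i$ is a function of $(C_{[1:i-1]},X_2^n,\omega)$, its conditional entropy given that tuple vanishes, and hence
\bes
nR_{12}\ge \sum_{i:\mathrm{odd}} H(C_i)\ge H(C_{[1:r]}|X_2^n,\omega)\ge I(X_1^n;C_{[1:r]}|X_2^n,\omega);
\ees
symmetrically $nR_{21}\ge I(X_2^n;C_{[1:r]}|X_1^n,\omega)$.

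Next I introduce a time-sharing $Q$ uniform on $[1:n]$ independent of everything else, and identify $X_j:=X_{j,Q}$, $Y_j:=Y_{j,Q}$, and $F_i:=(C_i,T_Q)$ where $T_Q$ is a carefully chosen vector of observations at indices other than $Q$ together with $(\omega,Q)$, along the lines of the classical interactive identification (e.g., Kaspi, Ma--Ishwar). The main obstacle is verifying that this identification simultaneously respects (i)~the round Markov chains $F_i-F_{[1:i-1]}X_1-X_2$ for odd $i$ and the symmetric one for even $i$, which follow from the fact that $C_i$ for odd $i$ is a function of $(C_{[1:i-1]},X_1^n,\omega)$ combined with the i.i.d.\ structure $(X_{1,t},X_{2,t})\sim q(x_{[1:2]})$; and (ii)~the decoder Markov chains $Y_1-\F X_1-X_2Y_2$ and $Y_2-\F X_2-X_1Y_1$, which follow from the decoder structure $Y_j^n=\mathsf{dec}_j(C_{[1:r]},X_j^n,\omega)$. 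A standard chain-rule computation then rewrites the cut-set bounds as $nI(X_1;\F|X_2)$ and $nI(X_2;\F|X_1)$ respectively.

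Finally, the empirical constraint \eqref{eq:em} forces the time-averaged law of $(X_{[1:2],Q},Y_{[1:2],Q})$ to be close in total variation to $q(x_{[1:2]},y_{[1:2]})$. Passing to a convergent subsequence of the resulting single-letter joint distributions (using the cardinality bounds on $|\mf_i|$ from Theorem~\ref{thm:m}, which follow from Carath\'eodory-type arguments) yields a limiting $p\in T(r)$ witnessing $R_{12}\ge I(X_1;\F|X_2)$ and $R_{21}\ge I(X_2;\F|X_1)$, completing the converse.
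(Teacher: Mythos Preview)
Your approach is essentially the paper's: achievability via Theorem~\ref{thm:m} with large $R_0$ combined with Remark~\ref{re:cor}; converse via a time-sharing single-letterization with the classical interactive auxiliary identification (the paper makes your ``$T_Q$'' explicit as $F_i=C_iX_1^{Q+1:n}X_2^{1:Q-1}Q$, with $\omega$ removed since empirical coordination needs no common randomness), followed by the closure/subsequence step.

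One technical slip to fix: the encoders in this paper are \emph{randomized} (see the problem statement), so $C_i$ for even $i$ is not a deterministic function of $(C_{[1:i-1]},X_2^n,\omega)$, and $H(C_i\,|\,C_{[1:i-1]},X_2^n,\omega)$ need not vanish; hence your inequality $\sum_{i:\mathrm{odd}}H(C_i)\ge H(C_{[1:r]}\,|\,X_2^n,\omega)$ is not justified as written. The paper sidesteps this by using only the Markov relation $C_i-C_{[1:i-1]}X_2^n\omega-X_1^n$ for even $i$, writing
\[
\sum_{i:\mathrm{odd}}H(C_i)\ \ge\ \sum_{i:\mathrm{odd}}I(C_i;X_1^n\,|\,C_{[1:i-1]}X_2^n\omega)\ =\ \sum_{i=1}^{r}I(C_i;X_1^n\,|\,C_{[1:i-1]}X_2^n\omega)\ =\ I(C_{[1:r]};X_1^n\,|\,X_2^n\omega),
\]
which is exactly the bound you want. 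Alternatively, a derandomization argument (private randomness does not enlarge the empirical region) would also repair your step. With either fix, the rest of your outline goes through.
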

The achievability part of this theorem comes from setting $R_0=\infty$ in Theorem \ref{thm:m}. The converse is relegated to Appendix \ref{apx:conv}.  
\begin{remark}
Interactive empirical coordination is related to the problem of interactive lossy source coding solved by Kaspi in \cite{kaspi}. The above theorem in conjunction with \cite[Theorem 9]{cuff:thesis} provides an alternative proof for that result.
\end{remark}
\subsection{Non-Symmetry of the Simulation Region}
One may expect that the simulation region should be symmetric. However the region $\ms(r)$ is not symmetric; in particular there is an interesting inequality on $R_0+R_{12}$ which is not symmetric. In this regard, the following observations are useful:
\begin{enumerate}
\item We are finding the region for a finite $r$ rounds of communication.  Since $r$ is fixed and the \underline{first party} starts the communication, there will be a non-symmetry.
\item The region would have been symmetric if the region was for infinite rounds of communication (i.e. $r\rightarrow\infty$). We prove this by showing that the constraint on $R_0+R_{12}$ can be relaxed from the definition of $\ms(r)$, when we want to compute $\ms(\infty)=\bigcup_{r\ge1}\ms(r)$. Let $\ms'(r)$ be the rate region obtained from $\ms(r)$ by relaxing the constraint on $R_0+R_{12}$. We show that $\ms'(r)\subseteq\ms(r+2)$. Take a point $(R_0,R_{12},R_{21})\in\ms'(r)$. Let $F_{[1:r]}$ be the corresponding  random variables for this point.  We find $F'_{[1:r+2]}$ to reach $(R_0, R_{12}, R_{21})$ as a point in $\ms(r+2)$. Define $F'_1=\emptyset, F'_2=\emptyset$ and $F'_i=F_{i-2}$ for $i>2$. Then writing the constraint corresponding to $F'_{[1:r+2]}$ and removing a redundant equation (the one on $R_0+R_{12}$) gives us what we need. 

\item Communication can itself be used to establish common randomness. For instance the first party can allocate parts of its first message to create common randomness. This implies that if the point $(R_0, R_1, R_2)$ is in $\mathcal{S}( r)$, so is $(R_0-\alpha, R_1+\alpha, R_2)$ in $\mathcal{S}( r)$ for positive $\alpha\leq R_0$. The given region has this property. 

On the other hand, if the second party wants to use its communication to generate common randomness, the first party who is initiating the communication cannot use this common randomness in the first round, which becomes a different setup from the one we are considering here. However, a special use of the communication by the second party to generate common randomness is to have the first party not sending anything in the first round. The second party sends a message of size $\alpha$ to be used as common randomness in the next rounds.  This implies that if the point $(R_0, R_1, R_2)$ is in $\mathcal{S}( r)$, so is $(R_0-\alpha, R_1, R_2+\alpha)$ in $\mathcal{S}( r+2)$ for positive $\alpha\leq R_0$. To show that our region has this property assume that we use $F_{[1:r]}$ to reach the point $(R_0, R_1, R_2)$ in $\mathcal{S}( r)$. We define $F'_{[1:r+2]}$ to reach $(R_0-\alpha, R_1, R_2+\alpha)$ as follows: $F'_1=\emptyset, F'_2=\emptyset$ and $F'_i=F_{i-2}$ for $i>2$. Substituting this and removing a redundant equation (the one on $R_0+R_{12}$) gives us what we need. 

We can also consider the case where the first party does not send anything, and the second party allocates parts of its message in the second round to generating common randomness. When the first party does not talk in the first round, it is as if the role of the second party and the first are S-Witched, but the number of rounds is increased by one. One can verify that the region given in the statement of the theorem still has the expected properties.


\end{enumerate}

\section{Achievability}
\label{sec:IV}
\subsection{Review of the output statistics of random binning}
\subsubsection{Description of the proof technique}
\label{sec:I:II}
In this paper, the achievability part of the proof is based on the technique of ``output statistics of random binning" (OSRB) that has been recently developed in \cite{me}. To explain the technique we begin with describing the resolvability lemma used by Cuff \cite[Lemma 6.1]{cuff}, and originally proved by Wyner. We report this lemma in a slightly different form that suits our purpose. Although we do not use this lemma in this work, since it is very central to the achievability proof of \cite{cuff-trans}, we illustrate how this lemma can be proved using the OSRB approach. 

To discuss the resolvability lemma  \cite[Lemma 6.1]{cuff}, let us fix some $p(x,y)$. Roughly speaking the lemma states that one can find $2^{nR}$ sequences in $\mx^n$, namely $x^n(1), x^n(2),\cdots, x^n({2^{nR}})$, such that if we choose one of these sequences at random and pass it through the DMC channel $p(y|x)$ we get an output sequence that is almost i.i.d. according to $p(y)$, as long as $R>I(X;Y)$. 
We can restate this lemma by letting $M$ to be a random variable whose alphabet is $\mm=[1:2^{nI(X;Y)}]$, and assuming that $X^n(M)$ is transmitted over the DMC channel $q(y|x)$. To prove this lemma in the traditional way one would construct a random codebook parametrized by a random variable $B$. Every choice of $B=b$ corresponds to a particular codebook (particular set of sequences in the $\mx^n$ space). The probability distribution imposed on the $\my^n$ space depends on the value of $B$, which is itself random. Therefore we use the capital letter $P_{Y^n}$ to denote the random p.m.f. induced on $\my^n$, by the random codebook. 
 To show the above lemma one would need to show that the expected value of the total variation distance between the probability measure $y^n\mapsto P(y^n)$ and the i.i.d.\ distribution is small. Therefore there exists $B=b$ where the total variation distance is small. Indeed this is the way that Cuff proves this lemma in \cite[Lemma IV. 1]{cuff-trans}, \cite[Lemma 19]{cuff:thesis}.

To illustrate the proof of this lemma using the OSRB approach, one would need to start from $n$ i.i.d.\ copies of $X^n$ and $Y^n$ from the given $p(x,y)$. Random variables $B$ and $M$ are identified as random binnings of $X^n$ at rates $2^{n\tR}$ and $2^{nR}$ respectively. Note the conceptual change is in starting from the i.i.d.\ distribution and then defining $B$ as a function of $X^n$. It is proved that if $\tR<H(X|Y)$, $B$ is almost independent of $Y^n$. Therefore, for almost any choice of $B=b$, the distribution of $Y^n$ conditioned on $B=b$ is almost i.i.d. On the other hand, if $\tR+R>H(X)$, $X^n$ will be a function of $(M,B)$ with high probability by the Slepian-Wolf. We are interpreting $B$ and $M$ as two messages coming from two encoders both observing $X^n$. These imply that one can find $B=b$ such that the conditional law $y^n\mapsto p(y^n|B=b)$ is close to the i.i.d.\ distribution, and at the same time $X^n$ is almost a function of $M$ conditioned on $B=b$. All the approximations in this intuitive argument can be made accurate. 

The crucial departure from the traditional argument was our treatment of random variable $B$. As discussed in \cite{me}, the randomness in generating a random codebook is generally conceived of a common randomness shared among the terminals in a problem. However, we are changing the order by first generating i.i.d.\ distributions and then treating $B$ as a random binning on this product i.i.d. space. 

\subsubsection{Main tools}
Let $(X_{[1:T]},Y)$ be a discrete memoryless correlated sources (DMCS) distributed according to a joint pmf $p_{X_{[1:T]},Y}$ on finite sets. A distributed  random  binning consists of a set of random mappings $\mb_i: \mx_i^n\rightarrow [1:2^{nR_i}]$, $i\in[1:T]$, in which $\mb_i$ maps each sequence of $\mx_i^n$ uniformly and independently to $[1:2^{nR_i}]$. We denote the random variable $\mb_t(X_t^n)$ by $B_t$. A random distributed  binning induces the following \emph{random pmf} on the set $\mx_{[1:T]}^n\times\my^n\times\prod_{t=1}^T [1:2^{nR_t}]$,
\[
P(x^n_{[1:T]},y^n,b_{[1:T]})=p(x_{[1:T]}^n,y^n)\prod_{t=1}^T\mathbf{1}\{\mb_t(x_t^n)=b_t\}.
\]
\begin{theorem}[\cite{me}]
\label{thm:re}
If for each $\ms\subseteq [1:T]$, the following constraint holds
\be
\sum_{t\in\ms}R_t<H(X_{\ms}|Y),
\ee
then as $n$ goes to infinity, we have
\be
\e\tv{P(y^n,b_{[1:T]})-p(y^n)\prod_{t=1}^T p^U_{[1:2^{nR_t}]}(b_t)}\rightarrow 0.
\ee
\end{theorem}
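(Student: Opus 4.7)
The plan is a second-moment method. Under the random binning $\e\mathbf{1}\{\mb_t(x_t^n)=b_t\}=2^{-nR_t}$ and the $\mb_t$ are independent across $t$, so
\[
\e P(y^n,b_{[1:T]})=p(y^n)\,2^{-n\sum_t R_t}.
\]
The target distribution in the claim is thus exactly the mean of the random pmf, and only a concentration statement (on average over $y^n$) remains.

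I would first split $\sum_{y^n}$ into a strongly typical piece $y^n\in\typ$ and its complement, with the atypical contribution $o(1)$ by the standard $\epsilon$-typicality estimate. For typical $y^n$, Jensen's inequality (to pull $\e$ inside a square root) combined with Cauchy--Schwarz over $b_{[1:T]}$ gives
\[
\e\sum_{b_{[1:T]}}\bigl|P(y^n,b_{[1:T]})-\e P(y^n,b_{[1:T]})\bigr|\le\sqrt{2^{n\sum_t R_t}\cdot\sum_{b_{[1:T]}}\mathrm{Var}(P(y^n,b_{[1:T]}))},
\]
reducing everything to a variance computation.

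The combinatorial core is to expand $\e[P(y^n,b_{[1:T]})^2]$ as a double sum over pairs $(x_{[1:T]}^n,\tilde x_{[1:T]}^n)$ and to partition these pairs by the subset $\ms\subseteq[1:T]$ of coordinates on which $x_t^n=\tilde x_t^n$. Using independence of the $\mb_t$'s, the expected product of the $2T$ indicators depends only on $\ms$ and equals $2^{-2n\sum_t R_t}\cdot 2^{n\sum_{t\in\ms}R_t}$; the $\ms=\emptyset$ term is precisely $(\e P)^2$ and cancels, yielding
\[
\sum_{b_{[1:T]}}\mathrm{Var}(P(y^n,b_{[1:T]}))\le 2^{-n\sum_t R_t}\sum_{\emptyset\ne\ms\subseteq[1:T]}2^{n\sum_{t\in\ms}R_t}\sum_{x_\ms^n}p(x_\ms^n,y^n)^2.
\]
Substituting, the $2^{n\sum_t R_t}$ factor from Cauchy--Schwarz cancels and leaves, per typical $y^n$, the bound $\sqrt{\sum_{\emptyset\ne\ms}2^{n\sum_{t\in\ms}R_t}\sum_{x_\ms^n}p(x_\ms^n,y^n)^2}$. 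On jointly typical pairs one has $p(x_\ms^n|y^n)\le 2^{-n(H(X_\ms|Y)-\delta)}$, whence $\sum_{x_\ms^n}p(x_\ms^n,y^n)^2\le p(y^n)^2\cdot 2^{-n(H(X_\ms|Y)-\delta)}$. Pulling $p(y^n)$ outside the square root and summing $\sum_{y^n\in\typ}p(y^n)\le 1$ gives a final bound of $\sqrt{\sum_{\emptyset\ne\ms}2^{n(\sum_{t\in\ms}R_t-H(X_\ms|Y)+\delta)}}$, which tends to zero under the hypothesis once $\delta$ is chosen small enough.

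The main obstacle is the bookkeeping over the $2^T-1$ nonempty subsets: strong typicality must hold simultaneously for every marginal $(X_\ms^n,Y^n)$, which costs a union bound that has to be absorbed into $\delta$; and the Cauchy--Schwarz factor $\sqrt{2^{n\sum_t R_t}}$ must be tracked cleanly through the variance so that the hypothesis $\sum_{t\in\ms}R_t<H(X_\ms|Y)$ alone is what drives every exponent negative. Once these are in place the bound is uniform in $y^n\in\typ$, and the claim follows.
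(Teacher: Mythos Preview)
The paper does not actually prove Theorem~\ref{thm:re}; it is quoted from \cite{me} and used as a black box throughout the achievability argument. There is therefore no in-paper proof to compare against.

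Your sketch is essentially the standard second-moment proof of this result and is the approach used in the cited reference. The structure (mean equals target, Cauchy--Schwarz to reduce to a variance, expansion over pairs partitioned by the agreement set $\ms$, and typicality bounds on $p(x_\ms^n\mid y^n)$) is correct and complete at the level of a sketch. One small imprecision: the $\ms=\emptyset$ term is not literally $(\e P)^2$ when you partition by the \emph{exact} agreement set, since $\ms=\emptyset$ then requires $x_t^n\neq\tilde x_t^n$ for every $t$; the clean way is to upper-bound each $\ms$-slice by relaxing ``agree exactly on $\ms$'' to ``agree on $\ms$'', after which the $\ms=\emptyset$ term is exactly $(\e P)^2$ and subtracts off. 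You should also note that the bound $p(x_\ms^n\mid y^n)\le 2^{-n(H(X_\ms|Y)-\delta)}$ only holds on the conditionally typical set; the cleanest fix is to restrict the defining sum in $P(y^n,b_{[1:T]})$ to jointly typical $x_{[1:T]}^n$ from the outset, which changes the total variation by $o(1)$ and makes every subsequent typicality estimate uniform. With those two cosmetic adjustments your argument goes through.
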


We now consider another region for which we can approximate a specified pmf. This region is the Slepian-Wolf (S-W) region for reconstructing $X^n_{[1:T]}$ in the presence of $(B_{[1:T]}, Y^n)$ at the decoder. 
 As in the achievability proof of the \cite[Theorem 15.4.1]{cover:book}, we can define a decoder with respect to any fixed distributed binning. We denote the decoder by the random conditional pmf $P^{S-W}(\hat{x}^n_{[1:T]}|y^n,b_{[1:T]})$ (note that since the decoder is a function, this pmf takes only two values, 0 and 1). Now we write the Slepian-Wolf theorem in the following equivalent form. See \cite{me} for details.
\begin{lemma}[\cite{me}]\label{le:S-W}
If for each $\ms\subseteq [1:T]$, the following constraint holds 
\be
\sum_{t\in\ms}R_t>H(X_{\ms}|X_{\ms^c}, Y),
\ee
then as $n$ goes to infinity, we have
\bes
\e\tv{P(x^n_{[1:T]},y^n,\hat{x}^n_{[1:T]})-p(x^n_{[1:T]},y^n)\mathbf{1}\{\hat{x}^n_{[1:T]}=x^n_{[1:T]}\}}\rightarrow 0.
\ees
\end{lemma}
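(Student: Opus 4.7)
My plan is to reduce the lemma to the standard Slepian--Wolf achievability theorem applied to the random binning. First I would specify the decoder $P^{S-W}(\hat{x}^n_{[1:T]}\mid y^n, b_{[1:T]})$ to be joint-typicality decoding with respect to $p_{X_{[1:T]}Y}$: on input $(y^n,b_{[1:T]})$, output the unique sequence $\hat{x}^n_{[1:T]}$ satisfying $\mb_t(\hat{x}_t^n)=b_t$ for every $t\in[1:T]$ and $(\hat{x}^n_{[1:T]}, y^n)\in\typ$, and output a fixed error symbol otherwise. Because this decoder is a (binning-dependent) deterministic function of $(y^n,b_{[1:T]})$, the induced $P^{S-W}$ is $\{0,1\}$-valued as the statement requires.

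The second step is the key identity
\[
\tv{P(x^n_{[1:T]}, y^n, \hat{x}^n_{[1:T]}) - p(x^n_{[1:T]}, y^n)\mathbf{1}\{\hat{x}^n_{[1:T]}=x^n_{[1:T]}\}} \;=\; 2\,P_e^{(\mathcal{B})},
\]
where $P_e^{(\mathcal{B})}$ is the Slepian--Wolf error probability of the chosen decoder under random binning $\mathcal{B}$ and source $p(x^n_{[1:T]}, y^n)$. This is obtained by a slice-wise computation: for each fixed $(x^n_{[1:T]}, y^n)$, the random pmf $P(x^n_{[1:T]}, y^n,\cdot)$ places all its mass at the single sequence returned by the decoder applied to $(y^n,\mathcal{B}(x^n_{[1:T]}))$. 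So if that output equals $x^n_{[1:T]}$ the slice contributes $0$ to the $\ell_1$-sum, and otherwise it contributes exactly $2p(x^n_{[1:T]}, y^n)$ (one unit at the true $x^n_{[1:T]}$, one unit at the decoder's output). Taking expectation over the random binning then reduces the lemma to showing $\e P_e^{(\mathcal{B})}\to 0$.

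The last step is the classical Slepian--Wolf error analysis (cf.\ \cite[Theorem~15.4.1]{cover:book}): the error event decomposes into the atypicality event $(X^n_{[1:T]}, Y^n)\notin\typ$, which vanishes by the AEP, together with, for each nonempty $\ms\subseteq[1:T]$, the event that some competing $\tilde{x}^n_\ms\neq x^n_\ms$ satisfies both $(\tilde{x}^n_\ms, x^n_{\ms^c}, y^n)\in\typ$ and $\mb_t(\tilde{x}_t^n)=\mb_t(x_t^n)$ for every $t\in\ms$. By independence and uniformity of the binning, the conditional probability of the latter is bounded by the number of competing typical sequences times $2^{-n\sum_{t\in\ms}R_t}$; standard typicality estimates bound the count by $2^{n(H(X_\ms\mid X_{\ms^c}, Y)+\delta(\epsilon))}$, so a union bound over the $2^T-1$ nonempty subsets yields a vanishing contribution exactly under the hypothesis $\sum_{t\in\ms}R_t>H(X_\ms\mid X_{\ms^c}, Y)$. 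The only delicate points are verifying the slice-wise identity cleanly and managing the combinatorics over the $2^T-1$ subsets, both of which are mechanical given the classical SW machinery.
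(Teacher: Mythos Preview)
Your proposal is correct and follows exactly the line the paper itself sketches: the paper does not give a detailed proof of this lemma but simply says to define the decoder ``as in the achievability proof of \cite[Theorem~15.4.1]{cover:book}'' and calls the statement ``the Slepian--Wolf theorem in the following equivalent form,'' deferring details to \cite{me}. Your reduction of the total variation to $2P_e^{(\mathcal{B})}$ via the slice-wise argument, followed by the standard joint-typicality error analysis over the $2^T-1$ nonempty subsets, is precisely the intended ``equivalent form'' computation, so there is nothing to add.
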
  
\begin{definition}\label{def:1}
For any random pmfs $P_X$ and $Q_X$ on $\mx$, we say $P_X\stackrel{\epsilon}{\approx}Q_X$ if $\e\tv{P_X-Q_X}<\epsilon$. Similarly we use $p_X\apx{\epsilon}q_x$ for two (non-random) pmfs to denote the total variation constraint $\tv{p_X-q_X}<\epsilon$.\end{definition}

\begin{lemma}[\cite{me}]\label{le:total}We have
\begin{enumerate}
\item
$\tv{p_Xp_{Y|X}-q_{X}p_{Y|X}}=\tv{p_X-q_X}$\\
$~~~~~~~~~~~\quad\tv{p_X-q_X}\le\tv{p_Xp_{Y|X}-q_{X}q_{Y|X}}$
\item If $p_Xp_{Y|X}\stackrel{\epsilon}{\approx}q_Xq_{Y|X}$, then there exists $x\in\mx$ such that $p_{Y|X=x}\stackrel{2\epsilon}{\approx}q_{Y|X=x}$.
\item If $P_X\stackrel{\epsilon}{\approx}Q_X$ and  $P_XP_{Y|X}\stackrel{\delta}{\approx}P_XQ_{Y|X}$, then $P_{X}P_{Y|X}\stackrel{\epsilon+\delta}{\approx}Q_{X}Q_{Y|X}$.
\end{enumerate}
\end{lemma}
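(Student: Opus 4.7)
The plan is to dispatch the three parts by elementary $L_1$ manipulations. Part 1 is pure algebra; Part 2 combines Part 1 with a triangle-inequality detour followed by a weighted-average argument; and Part 3 is a pointwise triangle inequality combined with linearity of expectation.

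For Part 1, I would expand the $L_1$ norm directly. Writing
\[
\tv{p_X p_{Y|X}-q_X p_{Y|X}}=\sum_{x,y}p(y|x)\,|p(x)-q(x)|
\]
and summing over $y$ first (using $\sum_y p(y|x)=1$) collapses this to $\sum_x|p(x)-q(x)|=\tv{p_X-q_X}$. The second inequality in Part 1 is just the data-processing property of total variation: $|p(x)-q(x)|=\bigl|\sum_y(p(x,y)-q(x,y))\bigr|\le\sum_y|p(x,y)-q(x,y)|$, followed by a sum over $x$.

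For Part 2, the idea is to insert the intermediate distribution $p_X q_{Y|X}$ so that only the conditional part is being compared. The triangle inequality gives
\[
\tv{p_X p_{Y|X}-p_X q_{Y|X}}\le\tv{p_X p_{Y|X}-q_X q_{Y|X}}+\tv{q_X q_{Y|X}-p_X q_{Y|X}}.
\]
The first summand is at most $\epsilon$ by hypothesis, and by the identity in Part 1 the second summand equals $\tv{p_X-q_X}$, which is itself bounded by $\epsilon$ via the data-processing inequality of Part 1. Hence the left side is at most $2\epsilon$. But that left side equals $\sum_x p(x)\tv{p_{Y|X=x}-q_{Y|X=x}}$, i.e., a $p_X$-weighted average of the conditional total variations. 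Since this average is at most $2\epsilon$, at least one $x$ in the support of $p_X$ must attain a value no larger than $2\epsilon$.

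For Part 3, I would apply the triangle inequality pointwise in the sample space and then take expectations. For every realization,
\[
\tv{P_X P_{Y|X}-Q_X Q_{Y|X}}\le\tv{P_X P_{Y|X}-P_X Q_{Y|X}}+\tv{P_X Q_{Y|X}-Q_X Q_{Y|X}},
\]
and by Part 1 the second term coincides with $\tv{P_X-Q_X}$. Taking expectations and invoking the two hypotheses gives $\e\tv{P_X P_{Y|X}-Q_X Q_{Y|X}}\le\delta+\epsilon$. The only subtle point worth being careful about is in Part 2, namely ensuring that the existential $x$ is selected from the support of $p_X$ so that $p_{Y|X=x}$ is well defined and so that the weighted-average argument actually produces such an element; everything else is routine.
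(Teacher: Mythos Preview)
Your proof is correct. The paper does not supply its own proof of this lemma; it simply imports the statement from \cite{me}, so there is no argument in the paper to compare against. Your elementary $L_1$ manipulations (direct expansion for Part~1, the detour through $p_Xq_{Y|X}$ plus the weighted-average argument for Part~2, and the pointwise triangle inequality followed by expectation for Part~3) are exactly the standard route, and the care you flag about choosing $x$ in the support of $p_X$ is the only point that needs mentioning.
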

\subsection{Achievability proof of Theorem \ref{thm:m}}\label{sec:ach}
We use a combination of the Slepian-Wolf theorem and Theorem \ref{thm:re} to prove Theorem \ref{thm:m}. 
\par 
The proof is divided into three parts. In the first part we introduce two protocols each of which induces a pmf on a certain set of r.v.'s. The first protocol has the desired i.i.d.\ property on $(X_{[1:2]}^n,Y_{[1:2]}^n)$ but leads to no concrete coding algorithm. However the second protocol is suitable for construction of a code, with one exception: the second protocol is assisted with an extra common randomness that does not really exist in the model. In the second part we find constraints on $R_0, R_{12}, R_{21}$ implying that these two induced distributions are almost identical. In the third part of the proof, we eliminate the extra common randomness given to the second protocol without disturbing the pmf induced on the desired random variables ($X_{[1:2]}^n,Y_{[1:2]}^n$) significantly. This makes the second protocol useful for code construction.

\emph{Part (1) of the proof:} 
Take an arbitrary $p(f_{[1:r]}, x_{[1:2]}, y_{[1:2]})\in T(r)$. Let $R_i$ be the rate of the communication at round $i$. Thus we have 
\be R_{12}=\sum_{i:odd} R_i,~~~~~R_{21}=\sum_{i:even}R_i.\label{eq:rate-split}\ee
 We define two protocols each of which induces a joint distribution on random variables that are defined during the protocol. 

\emph{Protocol A. } We begin by describing a random binning strategy that we will use when defining Protocol A. 

\underline{Random Binning}: Let $(\F^n, X_{[1:2]}^n, Y_{[1:2]}^n)$ be i.i.d. and distributed according to $p(f_{[1:r]}, x_{[1:2]}, y_{[1:2]})$. Since $p(f_{[1:r]}, x_{[1:2]}, y_{[1:2]})\in T(r)$, it factors as
\bes
\label{eq:1}\begin{split}
p(x_{[1:2]}^n)\left[\prod_{i=1}^r{p}(f_i^n|f_{[1:i-1]}^nx_{(i)_2}^n)\right]{p}(y_1^n|f_{[1:r]}^nx_1^n) p(y_2^n|f_{[1:r]}^nx_2^n).
\end{split}\ees

Consider the following random binning:
\begin{itemize}
\item To each sequence $f_1^n$, assign uniformly and independently three bin indices $b_1\in[1:2^{n\tR_1}]$, $k_1\in[1:2^{nR_1}]$ and $\omega\in[1:2^{nR_0}]$. 
\item For $i\in[2:r]$, to each sequence $(f_1^n,\cdots,f_i^n)$, assign uniformly and independently two bin indices $b_i\in[1:2^{n\tR_i}]$ and $k_i\in[1:2^{nR_i}]$. 
\end{itemize}
Furthermore, for $i\in[1:r]$, we consider the Slepian-Wolf decoder for recovering $\hat{f}_{i}^n$ from $(f_{[1:i-1]}^n,b_i,k_i,\omega,x_{(i+1)_2}^n)$ and denote it by $P^{S-W}(\hat{f}_{i}^n|b_i,k_i,\omega,f_{[1:i-1]}^n,x_{(i+1)_2}^n)$.  Note that we denote the estimates of $f_i^n$ by $\hat{f}_{i}^n$. The rate constraints for the success of these decoders will be imposed later, although these decoders can be conceived even when there is no guarantee of success.  

We  define $\hat{f}_{i,\mathsf{T}_1}^n$ for terminal 1, i.e. $\mathsf{T}_1$, and $\hat{f}_{i,\mathsf{T}_2}^n$ for terminal 2, i.e. $\mathsf{T}_2$ as follows:
\[\hat{f}_{i,\mathsf{T}_1}^n=\left\{\begin{array}{lc}f_i^n & \mbox{for odd $i$}\\
                                                                 \hat{f}_i^n& \mbox{for even $i$,}
                                             \end{array}  \right.\]
and                                                               
\[\hat{f}_{i,\mathsf{T}_2}^n=\left\{\begin{array}{lc}f_i^n & \mbox{for even $i$}\\
                                                                 \hat{f}_i^n& \mbox{for odd $i$.}
                                             \end{array}  \right.\]
The random
pmf induced by the random binning, denoted by $P$, can be expressed as follows:
\begin{align}
P(x_{[1:2]}^n,&f_{[1:r]}^n,b_{[1:r]}, k_{[1:r]}, \omega, y_{[1:2]}^n, \hat{f}_{[1:r],\ta}^n, \hat{f}_{[1:r],\tb}^n)\n&=
p(x_{[1:2]}^n)\left[\prod_{i=1}^r{p}(f_i^n|f_{[1:i-1]}^n, x_{(i)_2}^n)P(b_i,k_i,\omega_i|f_i^n,f_{[1:i-1]}^n)\right.\n&\qquad\qquad\qquad\left. P^{S-W}(\hat{f}_{i,\t{i+1}}^n|b_i,k_i,\omega_i,f_{[1:i-1]}^n,x^n_{(i+1)_2})\mathbf{1}\{\hat{f}^n_{i,\t{i}}=f^n_{i}\}\right] {p}(y_1^n|f_{[1:r]}^nx_1^n)p(y_2^n|f_{[1:r]}^nx_2^n)\n &= p(x_{[1:2]}^n)
\left[\prod_{i=1}^rP(b_i,\omega_i|f_{[1:i-1]}^n, x_{(i)_2}^n)P(f_i^n,k_i|b_i,\omega_i,f_{[1:i-1]}^n, x_{(i)_2}^n)\right.\n&\qquad\qquad\qquad\left. P^{S-W}(\hat{f}_{i,\t{i+1}}^n|b_i,k_i,\omega_i,f_{[1:i-1]}^n,x^n_{(i+1)_2})\mathbf{1}\{\hat{f}^n_{i,\t{i}}=f^n_{i}\}\right] {p}(y_1^n|f_{[1:r]}^nx_1^n)p(y_2^n|f_{[1:r]}^nx_2^n),\label{eq:3}
\end{align}
where $(i)_2:=i \mod 2$ and $\omega_1=\omega$, and $\omega_i$ is a constant variable for $i\ge 2$.

\emph{Protocol B.} Given some $p(f_{[1:r]}, x_{[1:2]}, y_{[1:2]})\in T(r)$, we define Protocol B as follows: 
In this protocol we assume that the terminals have access to the
shared randomness $B_{[1:r]}$ where $B_{[1:r]}$ are mutually independent r.v.'s and uniformly distributed on $\prod_{t=1}^r[1:2^{n\tR_t}]$. R.v. $\omega$ is also used for the common randomness (it is independent of $B_{[1:r]}$). The shared randomness $B_{[1:r]}$ does not really exist in the real model, and we will get rid of it later. However $\omega$ is the actual common randomness shared between the two terminals in the model. Random variable $K_i$ is used for the communication at round $i$. 
Then, the protocol proceeds as follows,
\begin{itemize}
\item In the first round, knowing $(b_1,\omega,x_1^n)$, terminal $1$ generates a sequence $f_1^n$ according to $P(f_1^n|b_1,\omega,x_1^n)$ of protocol A, and sends the bin index $k_1(f_1^n)$ of protocol A to the terminal $2$. At the end of the first round, terminal $2$ having $(b_1,\omega,k_1,x_2^n)$, uses the Slepian-Wolf decoder $P^{S-W}(\hat{f}_{1}^n|b_1,k_1,\omega,x_{2}^n)$ of protocol A to obtain an estimate of $f_1^n$. We use $\hat{f}_{1,\mathsf{T}_2}^n$ to denote this estimate of $f_1^n$ by the second terminal $\mathsf{T}_2$. Since the first terminal knows $f_1^n$ we set $\hat{f}_{1,\mathsf{T}_1}^n=f_1^n$ to be the estimate of $f_1^n$ by the first terminal $\mathsf{T}_1$.
\item In the second round, knowing $(b_2,x_2^n,\hat{f}_{1,\tb}^n)$, terminal $2$ generates a sequence $f_2^n$ according to $\\P_{F_2^n|B_2X_2^nF_1^n}(f_2^n|b_2,x_2^n,\hat{f}_{1,\tb}^n)$ of protocol A and sends the bin index $k_2(\hat{f}_{1,\tb}^n,\hat{f}_{2,\tb}^n)$ of Protocol A to the terminal $1$. At the end of the second round, terminal $1$ having $(b_2,k_2,x_1^n,\hat{f}_{1,\ta}^n)$, uses the $\\P^{S-W}_{\hat{F}_2^n|B_2,K_2,\omega,F_1^n,X_1^n}(\hat{f}_{2}^n|b_2,k_2,\omega,\hat{f}_{1,\mathsf{T}_1}^n,x_{1}^n)$ defined above to recover $\hat{f}_{2}^n$. We omit subscripts  from the pmfs when they are clear from the context.
\item This procedure is repeated interactively for $i\in[3:r]$. Thus, at the end of the round $r$, the first terminal has $\hat{f}_{[1:r],\ta}^n$ and the second terminal has $\hat{f}_{[1:r],\tb}^n$. 
\item The first terminal uses the conditional distribution $p(y_1|x_1,f_{[1:r]})$ that we started with at the beginning to create $y_1^n$ from the conditional distribution $p(y_1^n|x_1^n,\hat{f}^n_{[1:r],\mathsf{T}_1})$ and the second terminal uses the conditional distribution $p(y_2|x_2,f_{[1:r]})$ to create $y_2^n$ from $p(y_2^n|x_2^n,\hat{f}^n_{[1:r],\mathsf{T}_2})$.
\end{itemize}

The random pmf induced by the protocol, denoted by $\widehat{P}$, factors as
\begin{align}
\widehat{P}(x_{[1:2]}^n,&f_{[1:r]}^n,b_{[1:r]}, k_{[1:r]}, \omega, y_{[1:2]}^n, \hat{f}_{[1:r],\ta}^n, \hat{f}_{[1:r],\tb}^n)\n&=
p(x_{[1:2]}^n)p^U(\omega)p^U(b_{[1:r]})
\left[\prod_{i=1}^rP(f_i^n,k_i|b_i,\omega_i,\hat{f}_{[1:i-1],\t{i}}^n, x_{(i)_2}^n)\right.\n&~~\left.P^{S-W}(\hat{f}_{i,\t{i+1}}^n|b_i,k_i,\omega_i,\hat{f}_{[1:i-1],\t{i+1}}^n,x^n_{(i+1)_2}) \mathbf{1}\{\hat{f}^n_{i,\t{i}}=f^n_{i}\}\right] {p}(y_1^n|\hat{f}_{[1:r],\ta}^nx_1^n)p(y_2^n|\hat{f}_{[1:r],\tb}^nx_2^n)\n&=
p(x_{[1:2]}^n)
\left[\prod_{i=1}^rp^U(\omega_i)p^U(b_{i}) P(f_i^n,k_i|b_i,\omega_i,\hat{f}_{[1:i-1],\t{i}}^n, x_{(i)_2}^n)\right.\n&\left.\qquad\qquad~~~\qquad P^{S-W}(\hat{f}_{i,\t{i+1}}^n|b_i,k_i,\omega_i,\hat{f}_{[1:i-1],\t{i+1}}^n,x^n_{(i+1)_2}) \mathbf{1}\{\hat{f}^n_{i,\t{i}}=f^n_{i}\}\right]\n&\qquad\qquad\qquad\qquad\qquad~~~~~~~~~ {p}(y_1^n|\hat{f}_{[1:r],\ta}^nx_1^n)p(y_2^n|\hat{f}_{[1:r],\tb}^nx_2^n).
\label{eq:4}
\end{align}
where $\omega_1=\omega$, and $\omega_i$ is a constant variable for $i\ge 2$.

\emph{Part (2) of the proof: Sufficient conditions that make the induced pmfs approximately the same}:
We need to find conditions that imply
\begin{align*}
P(x_{[1:2]}^n,f_{[1:r]}^n,b_{[1:r]}, k_{[1:r]}, \omega, y_{[1:2]}^n, \hat{f}_{[1:r],\ta}^n, \hat{f}_{[1:r],\tb}^n)\apx{\epsilon_n}
\widehat{P}(x_{[1:2]}^n,f_{[1:r]}^n,b_{[1:r]}, k_{[1:r]}, \omega, y_{[1:2]}^n, \hat{f}_{[1:r],\ta}^n, \hat{f}_{[1:r],\tb}^n)
\end{align*}
for some $\epsilon_n$ converging to zero as $n\rightarrow\infty$.
We begin by proving that
\begin{align*}
P(x_{[1:2]}^n,f_{[1:r]}^n,b_{[1:r]}, k_{[1:r]}, \omega,\hat{f}_{[1:r],\ta}^n, \hat{f}_{[1:r],\tb}^n)\apx{\epsilon_n}
\widehat{P}(x_{[1:2]}^n,f_{[1:r]}^n,b_{[1:r]}, k_{[1:r]}, \omega, \hat{f}_{[1:r],\ta}^n, \hat{f}_{[1:r],\tb}^n),
\end{align*}
where we have dropped $y_{[1:2]}^n$ from both sides. We will add $y_{[1:2]}^n$ to the equation later.


To find the constraints that imply that the pmf $\widehat{P}$ is close to the pmf $P$ in total variation distance, 
we start with $P$ and make it close to $\widehat{P}$ in a few steps. For any $j\in[0:r]$ we inductively find constraints that imply
\begin{align}
P(x_{[1:2]}^n,f_{[1:j]}^n,b_{[1:j]}, k_{[1:j]}, \omega_j, \hat{f}_{[1:j],\ta}^n, \hat{f}_{[1:j],\tb}^n)\apx{\epsilon_n}
\widehat{P}(x_{[1:2]}^n,f_{[1:j]}^n,b_{[1:j]}, k_{[1:j]}, \omega_j, \hat{f}_{[1:j],\ta}^n, \hat{f}_{[1:j],\tb}^n).\label{eqn:apx-A}
\end{align}
for some $\epsilon_n$ converging to zero as $n\rightarrow\infty$, where $\omega_1=\omega$, and $\omega_j$ is a constant variable for $j\ge 2$.
For $j=0$ this is trivial since it reduces to $P(x_{[1:2]}^n)=p(x_{[1:2]}^n)=\widehat{P}(x_{[1:2]}^n)$. We show in Appendix \ref{apx:a} that the constraints sufficient to guarantee the statement for $j$ given that it holds for $j-1$ are as follows:
\begin{enumerate}
\item \emph{Reliability of S-W decoders:} For $j=1$ the S-W decoding is reliable if,
\be\label{eq:c1}
R_1+R_0+\tR_1\ge H(F_1|X_2).
\ee
For $j\geq 2$ the S-W decoding is reliable if,
\be\label{eq:c2}
\forall i\in[2:r]:\ R_i+\tR_i\ge H(F_i|X_{(i+1)_2}F_{[1:i-1]}).
\ee
\item \emph{Other constraints:} For $j=1$ we have the constraint
\be\label{eq:c3V1}
\begin{split}
R_0+\tR_1&< H(F_1|X_1).
\end{split}
\ee
For $j\geq 2$ we have the constraints
\be\label{eq:c3V2}
\begin{split}
\tR_i&<H(F_i|X_{(i)_2}F_{[1:i-1]}),\ \mbox{for $i=2,\cdots,r$}.
\end{split}
\ee
\end{enumerate}

The details can be found in Appendix \ref{apx:a}, but a brief description is in order. Comparing equations \eqref{eq:3} and \eqref{eq:4} we see that most of the terms are the same if we assume that the Slepian-Wolf decoders succeed with probability one. Constraints \eqref{eq:c1} and \eqref{eq:c2} guarantee the success of the Slepian-Wolf decoders with high probability. Now we assume that the Slepian-Wolf decoders succeed with probability one, that is $\hat{F}_{i,\t{i}}=F_i$ for each $i$. To make the two pmfs close we need to guarantee that $P(b_i,\omega_i|{f}_{[1:i-1]}^n, x_{(i)_2}^n)\apx{} p^U(\omega_i)p^U(b_{i})$. In other words we need constraints ensuring the uniformity of $(b_i,\omega_i)$ and its independence of $({f}_{[1:i-1]}^n, x_{(i)_2}^n)$. These constraints are given in equations \eqref{eq:c3V1} and \eqref{eq:c3V2}, and are obtained using Theorem \ref{thm:re}.


Therefore equations \eqref{eq:3}, \eqref{eq:4}, \eqref{eq:c3V1} and \eqref{eq:c3V2} imply that
\begin{align*}
P(x_{[1:2]}^n,f_{[1:r]}^n,b_{[1:r]}, k_{[1:r]}, \omega,\hat{f}_{[1:r],\ta}^n, \hat{f}_{[1:r],\tb}^n)\apx{\epsilon_n}
\widehat{P}(x_{[1:2]}^n,f_{[1:r]}^n,b_{[1:r]}, k_{[1:r]}, \omega, \hat{f}_{[1:r],\ta}^n, \hat{f}_{[1:r],\tb}^n).
\end{align*}
In Appendix \ref{apx:b} we show that the equations \eqref{eq:3}, \eqref{eq:4} imply that 
\begin{align}
P(x_{[1:2]}^n,f_{[1:r]}^n,b_{[1:r]}, k_{[1:r]}, \omega,y_{[1:2]}^n,\hat{f}_{[1:r],\ta}^n, \hat{f}_{[1:r],\tb}^n)\apx{\epsilon_n}
\widehat{P}(x_{[1:2]}^n,f_{[1:r]}^n,b_{[1:r]}, k_{[1:r]}, \omega,y_{[1:2]}^n,\ \hat{f}_{[1:r],\ta}^n, \hat{f}_{[1:r],\tb}^n).\label{eqn:apx-B}
\end{align}

Using part one of Lemma \ref{le:total} we can deduce the same approximation over the marginals
\begin{align}\label{eq:eq10}
\widehat{P}(&b_{[1:r]},x_{[1:2]}^n, y_{[1:2]}^n)\apx{\epsilon_n}P(b_{[1:r]},x_{[1:2]}^n,y_{[1:2]}^n)\end{align}
for some $\epsilon_n$ converging to zero as $n\rightarrow\infty$. In particular, the marginal pmf of $(X_{[1:2]}^n, Y_{[1:2]}^n)$ of the RHS of this expression is equal to $p(x_{[1:2]}^n,y_{[1:2]}^n)$ which is the desired pmf.

\emph{Part (3) of the proof: Eliminating the shared randomness:} 

  In the protocol we assumed that the terminals have access to shared randomness $B_{[1:r]}$ which is not present in the model (note that $\omega$ is the real common randomness shared between the two terminals in the model). To get rid of the shared randomness $B_{[1:r]}$, we would like to condition on a particular instance of $B_{[1:r]}=b_{[1:r]}$. However, conditioning on $b_{[1:r]}$ may change the marginal pmf of $X_{[1:2]}^n, Y_{[1:2]}^n$ on the LHS of \eqref{eq:eq10}. Thus, we want to impose certain constraints on the size of the bins to guarantee that the marginal pmfs do not change. In other words the induced pmf $\widehat{P}(x_{[1:2]}^n,y_{[1:2]}^n)$ changes to the conditional pmf $\widehat{P}(x_{[1:2]}^n,y_{[1:2]}^n|b_{[1:r]})$. But if $B_{[1:r]}$ is independent of $(X_{[1:2]}^n, Y_{[1:2]}^n)$, then the conditional pmf $\widehat{P}(x_{[1:2]}^n,y_{[1:2]}^n|b_{[1:r]})$ is also close to the desired distribution. Therefore we can assume that the terminals agree on an instance $b_{[1:r]}$ of $B_{[1:r]}$ and run protocol B. More precisely, to obtain the independence, we use Theorem \ref{thm:re} where we substitute $T=r$, $X_i=F_{[1:i]}$ and $Y=X_{[1:2]}Y_{[1:2]}$ to get the following sufficient condition for the pmf of protocol A:\footnote{{Here we only write the constraints corresponding to the subsets of $[1:r]$ of the form $[1:i], 1\le i\le r$ and omit the others, because the unwritten constraints are redundant. This is because the random variables $X_i=F_{[1:i]}$ are \emph{nested} r.v.'s. Each subset of $[1:r]$ can be written as $\ms=\{ m_1,m_2,\cdots,m_k\}$ where $\{m_j\}_{j=1}^k$ is an increasing sequence. In this case $X_{\ms}=F_{[1:m_k]}=X_{[1:m_k]}$ and the corresponding constraint is implied by the constraint corresponded to $[1:m_k]$.}}
\be\label{eq:c44}
\forall i\in[1:r],~~~\sum_{t=1}^i \tR_t<H(F_{[1:i]}|X_{[1:2]}Y_{[1:2]}).
\ee
This implies that
\begin{align}\label{eq:eq11}
P(b_{[1:r]},x_{[1:2]}^n,y_{[1:2]}^n)\apx{\delta_n}p^U(b_{[1:r]})p(x_{[1:2]}^n,y_{[1:2]}^n).\end{align}
Equations \eqref{eq:eq10} and \eqref{eq:eq11} in conjunction with the third part of Lemma \ref{le:total} imply that 
\begin{align}\label{eq:eq12}
\widehat{P}(&b_{[1:r]},x_{[1:2]}^n, y_{[1:2]}^n)\apx{\epsilon_n+\delta_n}p^U(b_{[1:r]})p(x_{[1:2]}^n,y_{[1:2]}^n).\end{align}
 Using Definition \ref{def:1}, equation \eqref{eq:eq12} guarantees existence of  a fixed binning with the corresponding pmf $p$ such that if we replace $P$ with $p$ in \eqref{eq:3} and denote the resulting pmf with $\hat{p}$. This would then imply that
 \begin{align*}
\hat{p}(&b_{[1:r]},x_{[1:2]}^n, y_{[1:2]}^n)\apx{\epsilon_n+\delta_n}p^U(b_{[1:r]})p(x_{[1:2]}^n,y_{[1:2]}^n).
 \end{align*}
 Now, the second part of Lemma \ref{le:total} shows that there exists an instance $b_{[1:r]}$ such that  
 \begin{align*}
\hat{p}(&x_{[1:2]}^n, y_{[1:2]}^n|b_{[1:r]})\apx{2\epsilon_n+2\delta_n}p(x_{[1:2]}^n,y_{[1:2]}^n).
\end{align*}

We have found all the necessary constraints on the size of the bins for protocol to work. Finally, eliminating $(\tR_1,\cdots,\tR_r)$ and $(R_1,\cdots,R_r)$ from the inequalities \eqref{eq:rate-split},\eqref{eq:c1}-\eqref{eq:c3V2} and \eqref{eq:c44} gives rise to the constraints given in the statement of the problem. This is done in Appendix \ref{apx:elimination}. This completes the proof of the achievability of Theorem \ref{thm:m}. 
    

\section{Converse}
\label{sec:V}
We follow the steps used in \cite{cuff-trans} to prove the converse of Theorem \ref{thm:m}. First for any $\epsilon>0$, we find a set $\ms_{\epsilon}(r)$ to be constituted as an outer region for channel simulation region. Then we discuss the continuity of $\ms_{\epsilon}(r)$ at $\epsilon=0$. In particular we show that $\ms(r)=\bigcap_{\epsilon>0}\ms_{\epsilon}(r)$. 

Let $(R_0,R_{12},R_{21})$ be an achievable rate tuple for $r$ rounds of communications. Then, for any $\epsilon<\frac{1}{2}$, there exists a simulation code of length $n$ such that the total variation between the induced pmf $\tilde{p}(y_{[1:2]}^n,x_{[1:2]}^n)$ and the $n$ i.i.d.\ repetitions  of the desired pmf $q(x,y)$ is less than $\epsilon$. 
\subsection{Mutual information bounds}
The following lemmas which are consequences of a generalized version of Lemma 2.7 of \cite{csiszar}, will be useful throughout the proof of the converse. The proofs are provided in the Appendix \ref{apx:d}.
\begin{lemma}\label{le:c1} For any discrete random variables $W^n$ and $Z$ whose joint pmf satisfies
\[
\tv{p(w^n,z)-p(z)\prod_{q=1}^n \widehat{p}_q(w_q|z)}<\epsilon<\frac{1}{2},
\]
for some $\widehat{p}_q(w|z)$, we have
\[\sum_{q=1}^n I(W_q;W^{q-1}|Z)\le 2n\left(\epsilon\log|\mw|+h_b(\epsilon)\right)
,\]
where $h_b(.)$ is the binary entropy function.
\end{lemma}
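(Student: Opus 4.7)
My plan is to reduce the claimed bound to a continuity-of-entropy estimate applied to $p(w^n,z)$ and its natural ``conditional-product'' approximation $\widehat p(w^n,z):=p(z)\prod_{q=1}^n\widehat p_q(w_q|z)$. The hypothesis is $\tv{p(w^n,z)-\widehat p(w^n,z)}<\epsilon$. Since $W_1,\dots,W_n$ are, under $\widehat p$, mutually conditionally independent given $Z$, the entire quantity $\sum_q I(W_q;W^{q-1}|Z)$ \emph{vanishes} when computed under $\widehat p$. Therefore my task reduces to showing that this quantity can change by at most $2n(\epsilon\log|\mw|+h_b(\epsilon))$ when the underlying joint pmf is perturbed in total variation by $\epsilon$.

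I would begin by invoking the chain-rule identity
\[
\sum_{q=1}^n I(W_q;W^{q-1}|Z)=\sum_{q=1}^n H(W_q|Z)-H(W^n|Z),
\]
which under $\widehat p$ equals zero by the conditional independence just noted. Subtracting the two expressions, it suffices to bound $|H_p(W_q|Z)-H_{\widehat p}(W_q|Z)|$ for each $q\in[1:n]$ and $|H_p(W^n|Z)-H_{\widehat p}(W^n|Z)|$ individually. A useful observation is that $p$ and $\widehat p$ share the same $Z$-marginal, and marginalizing out coordinates is an $L^1$-contraction, so every relevant joint pmf of $p$ lies within $\epsilon$ in total variation of the corresponding joint pmf of $\widehat p$, while the $Z$-marginal is preserved.

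I would then apply the \emph{conditional} form of Csisz\'ar's Lemma~2.7, which is the ``generalized version'' the paper refers to: whenever two joint pmfs on $\mathcal{A}\times\mathcal{Z}$ share the same $Z$-marginal and differ in $L^1$ by at most $\theta<1/2$, the corresponding conditional entropies $H(A|Z)$ differ by at most $\theta\log|\mathcal{A}|+h_b(\theta)$. Using this with $\mathcal{A}=\mw$ for each of the $n$ terms $H(W_q|Z)$ gives a total contribution of $n(\epsilon\log|\mw|+h_b(\epsilon))$, and using it once with $\mathcal{A}=\mw^n$ for the $H(W^n|Z)$-term gives $n\epsilon\log|\mw|+h_b(\epsilon)$. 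Adding the two and noting $(n+1)h_b(\epsilon)\le 2nh_b(\epsilon)$ for $n\ge1$ yields the target $2n(\epsilon\log|\mw|+h_b(\epsilon))$.

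The one step that needs actual argument is the conditional sharpening of Csisz\'ar's lemma itself: the unconditional bound on $|H_p(A,Z)-H_q(A,Z)|$ would involve $\log(|\mathcal{A}||\mathcal{Z}|)$, which is too weak. Luckily, because $p$ and $\widehat p$ agree on the $Z$-marginal, the desired conditional inequality follows by applying the unconditional Csisz\'ar lemma fiberwise in $z$ to the conditionals $p(\cdot|z)$ and $\widehat p(\cdot|z)$, and then averaging over $Z$ via Jensen's inequality, using concavity of $h_b(\cdot)$ together with the fact that the expected fiberwise $L^1$ distance equals the joint $L^1$ distance when the $Z$-marginals coincide. I expect Appendix~\ref{apx:d} to verify precisely this fiberwise-plus-Jensen argument; once it is available, combining it with the chain-rule identity above is essentially bookkeeping.
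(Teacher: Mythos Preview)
Your proposal is correct and follows essentially the same route as the paper: it uses the chain-rule identity $\sum_q I(W_q;W^{q-1}|Z)=\sum_q H(W_q|Z)-H(W^n|Z)$, compares each term to its counterpart under the conditionally independent pmf $\widehat p$, and bounds the differences via the conditional sharpening of Csisz\'ar's Lemma~2.7 obtained by a fiberwise application plus Jensen (this is exactly the paper's Lemma~\ref{le:gcsiszar}). The paper likewise arrives at $2n\epsilon\log|\mw|+(n+1)h_b(\epsilon)$ and then dominates $(n+1)h_b(\epsilon)$ by $2nh_b(\epsilon)$, just as you do.
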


\begin{lemma}\label{le:c2}
Take an arbitrary i.i.d. sequence $X^n$ distributed according to $p(x)$ and a conditional pmf $p(y^n|x^n)$ which is not necessarily i.i.d. If there exists a conditional pmf $\widehat{p}(y|x)$ such that
\[\tv{p(y^n|x^n)\prod_{q=1}^{n} p(x_q)-\prod_{q=1}^n p(x_q)\widehat{p}(y_q|x_q)}<\epsilon<\frac{1}{2},\]
then 
\be
\forall q\in[1:n]:~~~~~I(X_{[\sim q]};Y_q|X_q)\le2\left(\epsilon\log|\my|+h_b(\epsilon)\right),\label{eq:le-c2-1}
\ee
where $[\sim q]:=[1:n]\backslash\{q\}$.

Also, for any random variable $Q\in[1:n]$ independent of $(X^n,Y^n)$ we have
\be
I(Y_Q;Q|X_Q)\le 2(\epsilon\log|\my|+h_b(\epsilon))\label{eq:le-c2-2}
.\ee
\end{lemma}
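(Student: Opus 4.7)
The plan is to compare the given joint distribution $p(x^n,y^n)=\prod_q p(x_q)\,p(y^n|x^n)$ with the product-form distribution $\widehat{p}(x^n,y^n):=\prod_q p(x_q)\,\widehat{p}(y_q|x_q)$. I will exploit two properties: (i) $p$ and $\widehat{p}$ share the marginal $\prod_q p(x_q)$ on $X^n$, and by hypothesis $\tv{p-\widehat{p}}<\epsilon<\tfrac{1}{2}$; (ii) under $\widehat{p}$ the Markov chain $X_{[\sim q]}-X_q-Y_q$ holds, so $I_{\widehat{p}}(X_{[\sim q]};Y_q|X_q)=0$, and (after introducing a time-sharing variable $Q$ uniform on $[1:n]$ independent of $(X^n,Y^n)$) also $I_{\widehat{p}}(Y_Q;Q|X_Q)=0$. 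The strategy is to write each target mutual information as a difference of $p$- and $\widehat{p}$-entropies and bound the gap by the generalized form of Lemma~2.7 of \cite{csiszar}.

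For \eqref{eq:le-c2-1}, I would fix $q$ and decompose
\[
I_p(X_{[\sim q]};Y_q|X_q) = \bigl[H_p(Y_q|X_q)-H_{\widehat{p}}(Y_q|X_q)\bigr]-\bigl[H_p(Y_q|X^n)-H_{\widehat{p}}(Y_q|X^n)\bigr].
\]
Contractivity of total variation under marginalization gives $\tv{p(x_q,y_q)-\widehat{p}(x_q,y_q)}\le\epsilon$ and $\tv{p(x^n,y_q)-\widehat{p}(x^n,y_q)}\le\epsilon$. Because $p$ and $\widehat{p}$ agree on the respective conditioning marginals $p(x_q)$ and $\prod_j p(x_j)$, the Csiszar continuity estimate bounds each bracketed difference by $\epsilon\log|\my|+h_b(\epsilon)$, summing to the desired $2(\epsilon\log|\my|+h_b(\epsilon))$.

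For \eqref{eq:le-c2-2}, the marginal of $(X_Q,Q)$ coincides under $p$ and $\widehat{p}$, and by the mixture form of total variation
\[
\tv{p(x_Q,y_Q,q)-\widehat{p}(x_Q,y_Q,q)} = \tfrac{1}{n}\sum_{q=1}^{n}\tv{p(x_q,y_q)-\widehat{p}(x_q,y_q)}\le\epsilon,
\]
with the analogous bound on $(X_Q,Y_Q)$ by data processing. I would then decompose
\[
I_p(Y_Q;Q|X_Q) = \bigl[H_p(Y_Q|X_Q)-H_{\widehat{p}}(Y_Q|X_Q)\bigr]-\bigl[H_p(Y_Q|X_Q,Q)-H_{\widehat{p}}(Y_Q|X_Q,Q)\bigr]
\]
and apply the same continuity bound termwise.

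The one technical point to watch is ensuring the continuity estimate carries only a $\log|\my|$ penalty rather than $\log|\mx|^n$, which would be vacuous in \eqref{eq:le-c2-1}. This is available precisely because $p$ and $\widehat{p}$ share the conditioning marginals in every application: the conditional-entropy gap then reduces to an expectation over the conditioning variable of entropy gaps on the $\my$-valued coordinate, each controlled alphabetwise by the standard Csiszar continuity inequality.
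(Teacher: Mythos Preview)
Your proposal is correct and follows essentially the same route as the paper: both arguments write the target mutual information as a difference of two conditional-entropy gaps between $p$ and $\widehat{p}$, use that the corresponding mutual information vanishes under $\widehat{p}$, and bound each gap via the conditional extension of Csisz\'ar's continuity lemma (the paper's Lemma~\ref{le:gcsiszar}), which gives only a $\log|\my|$ penalty precisely because the conditioning marginals coincide. One small correction: the statement allows an arbitrary distribution for $Q$ on $[1:n]$, not just the uniform one, so your mixture identity should read $\sum_q p_Q(q)\,\tv{p(x_q,y_q)-\widehat{p}(x_q,y_q)}\le\epsilon$ rather than $\tfrac{1}{n}\sum_q(\cdot)$; the rest of the argument is unchanged.
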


\subsection{Epsilon rate region}
\begin{lemma}
For all $\epsilon>0$, the simulation rate region is a subset of the set $\ms_{\epsilon}(r)$ which is as the set of all non-negative rate tuples $(R_0,R_{12},R_{21})$ for which there exists $p(f_1,\cdots,f_r,x_{[1:2]},y_{[1:2]})\in T_{\epsilon}(r)$ such that:

\begin{align}
R_{12}&\ge I(X_1;\F|X_2),\n
R_{21}&\ge I(X_2;\F|X_1),\n
R_0+R_{12}&\ge I(X_1;\F|X_2)+I(F_1;Y_{[1:2]}|X_{[1:2]})-3g(\epsilon),\n
R_0+R_{12}+R_{21}&\ge I(X_1;\F|X_2)+I(X_2;\F|X_1)+I(\F;Y_{[1:2]}|X_{[1:2]})-3g(\epsilon),
\end{align}
where $g(\epsilon):=2\left(\epsilon\log|\my_{[1:2]}|+h_b(\epsilon)\right)$ and $T_{\epsilon}(r)$ is the set of $p(f_1,\cdots,f_r,x_{[1:2]},y_{[1:2]})$ satisfying
\begin{align}
&\left\|p(x_{[1:2]},y_{[1:2]})-q(x_{[1:2]})q(y_{[1:2]}|x_{[1:2]})\right\|<\epsilon,\n
&\qquad\qquad F_i-F_{[1:i-1]}X_1-X_2, \ \mbox{if $i$ is odd,} \n
&\qquad\qquad  F_i-F_{[1:i-1]}X_2-X_1, \ \mbox{if $i$ is even,} \n
&\qquad\qquad Y_1-\F X_1-X_2Y_2,\n
&\qquad\qquad Y_2-\F X_2-X_1Y_1\n
\forall i:&\qquad
|\mf_i|\le|\mx_1||\mx_2||\my_1||\my_2|\prod_{j=1}^{i-1}|\mf_j|+1.
\end{align}
\end{lemma}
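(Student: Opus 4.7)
The plan is to follow the standard single-letterization converse of Cuff \cite{cuff-trans} for channel simulation, extending it to $r$ interactive rounds in the style of Ma and Ishwar, while carefully tracking the slack introduced by the fact that the induced joint pmf is only within $\epsilon$ in total variation of the target i.i.d.\ distribution. Fix any $(n, R_0, R_{12}, R_{21})$ code meeting the achievability definition at level $\epsilon$, and let $Q$ be a time-sharing random variable uniform on $[1:n]$ and independent of everything else. I would define the single-letter auxiliaries
\bes
F_i := (C_i,\, X_{1,[1:Q-1]},\, X_{2,[Q+1:n]},\, \omega,\, Q), \qquad i \in [1:r],
\ees
and set $X_j := X_{j,Q}$, $Y_j := Y_{j,Q}$ for $j \in \{1,2\}$. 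Under this choice $F_{[1:i]}$ encodes the transcript through round $i$ together with a fixed slice of side information.

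The second step is to verify the Markov chains defining $T_\epsilon(r)$. For odd $i$, $C_i$ is a deterministic function of $(C_{[1:i-1]}, X_1^n, \omega)$; the only piece of $X_1^n$ missing from $(F_{[1:i-1]}, X_{1,Q})$ is $X_{1,[Q+1:n]}$, and by the i.i.d.\ structure of the source this is conditionally independent of $X_{2,Q}$ given the conditioning. The analogous computation handles even $i$. The decoder Markov chains $Y_1 - F_{[1:r]} X_1 - X_2 Y_2$ and $Y_2 - F_{[1:r]} X_2 - X_1 Y_1$ hold because each decoder is a function only of $(C_{[1:r]}, X_j^n, \omega)$, while the proximity $\| p(x_{[1:2]}, y_{[1:2]}) - q(x_{[1:2]}) q(y_{[1:2]}|x_{[1:2]}) \| < \epsilon$ follows by marginalizing the hypothesis over all but the $Q$-th coordinate.

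For the rate bounds, since for even $i$ the message $C_i$ is a function of $(C_{[1:i-1]}, X_2^n, \omega)$, the standard chain gives
\bes
n R_{12} \,\ge\, \sum_{i \text{ odd}} H(C_i) \,\ge\, I(C_{[1:r]}; X_1^n \mid X_2^n, \omega),
\ees
which single-letterizes via $Q$ and the i.i.d.\ property to yield $R_{12} \ge I(X_1; F_{[1:r]} \mid X_2)$; the symmetric argument produces $R_{21} \ge I(X_2; F_{[1:r]} \mid X_1)$. The remaining two bounds come from $n(R_0 + R_{12}) \ge H(\omega, C_{[\text{odd}]})$ and $n(R_0 + R_{12} + R_{21}) \ge H(\omega, C_{[1:r]})$ after a chain-rule split into an $X_1^n$-information term and a $Y_{[1:2]}^n$-information term. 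This is where Lemmas \ref{le:c1} and \ref{le:c2} enter: because the tilde-distribution is only $\epsilon$-close to i.i.d., each exchange of a joint quantity for a marginal (or single-letter) one charges an additive $g(\epsilon)$, and exactly three such charges produce the $-3g(\epsilon)$. The cardinality bounds then follow from a routine Fenchel--Eggleston--Carath\'{e}odory support reduction applied round-by-round.

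The main obstacle I foresee is selecting the side-information content of $F_i$ so that the Markov chains hold exactly under the given i.i.d.\ source: too much side information destroys conditional independence (for instance because the earlier $C_j$'s entangle the $X_1$ and $X_2$ coordinates), while too little prevents $C_i$ from becoming a function of $(F_{[1:i-1]}, X_{(i)_2,Q})$. A secondary delicacy is keeping the continuity slack to exactly $3g(\epsilon)$ rather than a larger multiple, which requires pinpointing precisely the three places where one must pay for the fact that $\tilde{p}$ is not i.i.d.\ and invoking Lemma \ref{le:c2} at those steps alone.
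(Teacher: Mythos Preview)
Your approach is the paper's: the same time-sharing identification (yours uses the mirrored slice $F_i=(C_i,X_{1}^{1:Q-1},X_{2}^{Q+1:n},\omega,Q)$ instead of the paper's $F_i=(\omega,C_i,X_1^{Q+1:n},X_2^{1:Q-1},Q)$, which works equally well by the obvious time-reversal symmetry), the same three $g(\epsilon)$ charges via Lemmas~\ref{le:c1} and~\ref{le:c2}, and the same Carath\'eodory reduction for the cardinality bounds.

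Two places are too breezy. First, the Markov chains do \emph{not} follow ``by the i.i.d.\ structure of the source'': once $C_{[1:i-1]}$ is in the conditioning set the two source blocks are entangled, and the paper needs the inductive Lemma~\ref{le:markov} to show $I(X_1^{1:q-1};X_2^{q:n}\mid \omega C_{[1:i]}X_1^{q:n}X_2^{1:q-1})=0$; you will need the mirror of that. (Also, the encoders are randomized, so $C_i$ is not a deterministic function of $(C_{[1:i-1]},X_1^n,\omega)$, though the Markov chain still holds.) Second, for the $R_0+R_{12}$ bound the paper does \emph{not} start from $H(\omega,C_{\text{odd}})$ and split: it isolates $H(\omega C_1)$ for the $Y_{[1:2]}^n$-information and uses the later odd $C_i$ only against $X_1^n$, which is precisely what yields $I(F_1;Y_{[1:2]}|X_{[1:2]})$ with $F_1$ alone; if you try to keep all odd rounds in the $Y$-term the chain rule does not collapse cleanly because the interleaved even $C_j$'s are missing from the conditioning.
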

\begin{proof}
Without loss of generality, we can relax the cardinality bound from the definition of $T_{\epsilon}(r)$; an application of Fenchel-Carathe\`{o}dory theorem implies that the region $\ms_{\epsilon}(r)$ does not enlarge with this relaxation. See Appendix \ref{apx:cardinality} for the proof of cardinality bounds.

Take a random variable $Q$ uniform on $[1:n]$ and independent of all other random variables. Define $F_i=\omega C_i X_{1}^{Q+1:n}X_2^{1:Q-1}Q$ for $1\le i\le r$ and $X_i=X_{iQ},Y_i=Y_{iQ}$ for $i=1,2$. In the first step of the proof, we show the Markov chain conditions given in the definition of $T_{\epsilon}(r)$ are satisfied by this choice of auxiliary r.v.'s. These conditions are equivalent with the following
\begin{align}
C_i-&\omega C_{[1:i-1]}X_{1}^{q:n}X_2^{1:q-1}-X_{2,q} \ \mbox{if $i$ is odd,}\n
C_i-&\omega C_{[1:i-1]}X_{1}^{q+1:n}X_2^{1:q}-X_{1,q} \ \mbox{if $i$ is even,}\n
Y_{1,q}-&\omega C_{[1:r]}X_{1}^{q:n}X_2^{1:q-1}-X_{2,q}Y_{2,q},\n
Y_{2,q}-&\omega C_{[1:r]}X_{1}^{q+1:n}X_2^{1:q}-X_{1,q}Y_{1,q}.\label{eqn:apx-F}
\end{align}
\par The proof is provided in Appendix \ref{apx:c}. 

We know that
\[\tv{\tilde{p}(x_{[1:2]}^n,y_{[1:2]}^n)-q(x_{[1:2]}^n,y_{[1:2]}^n)}<\epsilon,\]
where $\tilde{p}(x_{[1:2]}^n,y_{[1:2]}^n)$ is the induced pmf of the code. This implies that for any value of $Q=q$,
\[\tv{\tilde{p}(x_{[1:2],q},y_{[1:2],q})-q(x_{[1:2]},y_{[1:2]})}<\epsilon,\]
therefore the total variation distance between the average of $\tilde{p}(x_{[1:2],q},y_{[1:2],q})$ over $Q=q$ (i.e. $\tilde{p}(x_{[1:2],Q},y_{[1:2],Q})$) and $q(x_{[1:2]},y_{[1:2]})$ is small, that is
\[\tv{\tilde{p}(x_{[1:2],Q},y_{[1:2],Q})-q(x_{[1:2]},y_{[1:2]})}<\epsilon.\]

Next we have
\begin{align}
nR_{12}&\ge \sum_{i:odd}H(C_i)\n&
\ge \sum_{i:odd}I(C_i;X_1^n|C_{[1:i-1]}X_2^n\omega)\n
&= \sum_{i=1}^rI(C_i;X_1^n|C_{[1:i-1]}X_2^n\omega)\label{eq:aA1}\\
&=I(C_{[1:r]};X_1^n|X_2^n\omega)\n&
=I(\omega C_{[1:r]};X_1^n|X_2^n)\label{eq:a2}\\
&=\sum_{q=1}^n I(\omega C_{[1:r]};X_{1,q}|X_{1}^{q+1:n}X_2^n)\n
&=\sum_{q=1}^n I(\omega C_{[1:r]}X_{1}^{q+1:n}X_{2,\sim q};X_{1,q}|X_{2,q})\\
&\ge \sum_{q=1}^n I(\omega C_{[1:r]}X_{1}^{q+1:n}X_{2}^{1:q-1};X_{1,q}|X_{2,q})\n
&=nI(\omega C_{[1:r]}X_{1}^{Q+1:n}X_{2}^{1:Q-1};X_{1,Q}|X_{2,Q},Q) \\
&=nI(\omega C_{[1:r]}X_{1}^{Q+1:n}X_{2}^{1:Q-1}Q;X_{1,Q}|X_{2,Q})\label{eq:a3}\\
&=nI(\F;X_{1}|X_{2})\label{eq:a4}
\end{align}
where \eqref{eq:aA1} follows from the Markov chain $C_i-C_{[1:i-1]}X_2^n\omega-X_1^n$ for even $i$, \eqref{eq:a2} is due to the independence of common randomness $\omega$ from $X_1^nX_2^n$ and the rest of the equations follow from the fact that $X_{1q},X_{2q}$ are i.i.d. repetitions. 

A similar statement can be proved for $R_{21}$:
\be
R_{21}\ge I(\F;X_{2}|X_{1}).\label{eq:b4}\ee
\vspace{.1cm}Next consider,
\begin{align}
n(R_{12}+R_0)&\ge H(\omega C_1)+\sum_{i:odd,~ i>1}H(C_i)\n &\ge H(\omega C_1|X_2^n)+\sum_{i:odd,~ i>1}H(C_i|C_{[1:i-1]}X_2^n\omega)\n
&\ge I(\omega C_1;Y_{[1:2]}^nX_1^n|X_2^n)+\sum_{i:odd,\ i>1}I(C_i;X_1^n|C_{[1:i-1]}X_2^n\omega)\n
&= I(\omega C_1;Y_{[1:2]}^n|X_{[1:2]}^n)+I(\omega C_1;X_1^n|X_2^n)+\sum_{i>1}I(C_i;X_1^n|C_{[1:i-1]}X_2^n\omega)\label{eq:b0.5}\\
&= I(\omega C_1;Y_{[1:2]}^n|X_{[1:2]}^n)+I(\omega C_1;X_1^n|X_2^n)+I(C_{[2:r]};X_1^n|X_2^nC_1\omega)\n
&{=} I(\omega C_1;Y_{[1:2]}^n|X_{[1:2]}^n)+I(C_{[1:r]}\omega;X_1^n|X_2^n)\n
&\ge I(\omega C_1;Y_{[1:2]}^n|X_{[1:2]}^n)+nI(\F;X_{1}|X_{2}),\label{eq:b1}
\end{align}
where \eqref{eq:b0.5} follows from the Markov chain $C_i-C_{[1:i-1]}X_2^n\omega-X_1^n$ for even $i$. Equation \eqref{eq:b1} follows equality of equations \eqref{eq:a2} and \eqref{eq:a4}. Now, we work out the first term of equation \eqref{eq:b1}.
\begin{align}
I(\omega C_1;Y_{[1:2]}^n|X_{[1:2]}^n)&=\sum_{q=1}^n I(\omega C_1;Y_{[1,2],q}|X_{[1:2]}^n,Y_{[1:2]}^{1:q-1})\n&=
\sum_{q=1}^nI(\omega C_1Y_{[1:2]}^{1:q-1};Y_{[1,2],q}|X_{[1:2]}^n)
-\sum_{q=1}^n I(Y_{[1:2]}^{1:q-1};Y_{[1:2],q}|X_{[1:2]}^n)\n
&\stackrel{(a)}\ge \sum_{q=1}^nI(\omega C_1;Y_{[1,2],q}|X_{[1:2]}^n)-ng(\epsilon)\n
& =\sum_{q=1}^nI(\omega C_1X_{[1:2],\sim q};Y_{[1:2],q}|X_{[1:2],q})-\sum_{q=1}^nI(X_{[1:2],\sim q};Y_{[1:2],q}|X_{[1:2],q})-ng(\epsilon)\n
&\stackrel{(b)}{\ge} \sum_{q=1}^nI(\omega C_1X_{1}^{q+1:n}X_2^{1:q-1};Y_{[1:2],q}|X_{[1:2],q})-2ng(\epsilon)\n
&=nI(\omega C_1X_{1}^{Q+1:n}X_2^{1:Q-1};Y_{[1:2],Q}|X_{[1:2],Q},Q)-2ng(\epsilon)\n
&=nI(Q\omega C_1X_{1}^{Q+1:n}X_2^{1:Q-1};Y_{[1:2],Q}|X_{[1:2],Q})-nI(Q;Y_{[1:2],Q}|X_{[1:2],Q})-2ng(\epsilon)\n
&\stackrel{(c)}{\ge} nI(F_1;Y_{[1:2]}|X_{[1:2]})-3ng(\epsilon),\label{eq:b50}
\end{align}
where  (a) is a result of Lemma \ref{le:c1}, and  (b) and  (c) follow from the Lemma \ref{le:c2}.

Equations \eqref{eq:b1} and \eqref{eq:b50} imply that 
\be\label{eq:c4}
R_{12}+R_0\ge I(F_1;Y_{[1:2]}|X_{[1:2]})+I(F_{[1:r]};X_{1}|X_{2})-3g(\epsilon).
\ee
\par Following the same lines as in the previous cases, we can show that
\begin{align}
n(R_0+R_{12}+R_{21})&\ge H(\omega C_1|X_{[1:2]}^n)+\sum_{i>1,\ i:odd}H(C_i|\omega C_{[1:i-1]}X_2^n)+\sum_{i:even}H(C_i|\omega C_{[1:i-1]}X_1^n)\n
&\ge I(\omega C_1;Y_{[1:2]}^nX_1^n|X_2^n)+\sum_{i:odd\atop i>1}I(C_i;Y_{[1:2]}^nX_1^n|\omega C_{[1:i-1]}X_2^n)+\sum_{i:even}I(C_i;Y_{[1:2]}^nX_2^n|\omega C_{[1:i-1]}X_1^n)\n
&= I(\omega C_1;X_1^n|X_2^n)+ I(\omega C_1;Y_{[1:2]}^n|X_{[1:2]}^n)\n&\qquad+\sum_{i:odd\atop i>1}\left[I(C_i;X_1^n|\omega C_{[1:i-1]}X_2^n)+I(C_i;Y_{[1:2]}^n|\omega C_{[1:i-1]}X_{[1:2]}^n)\right]\n&\qquad +
\sum_{i:even}\left[I(C_i;X_2^n|\omega C_{[1:i-1]}X_1^n)+I(C_i;Y_{[1:2]}^n|\omega C_{[1:i-1]}X_{[1:2]}^n)\right]\n
&= I(\omega C_1;X_1^n|X_2^n)+\sum_{i:odd\atop i>1}I(C_i;X_1^n|\omega C_{[1:i-1]}X_2^n)+\sum_{i:even}I(C_i;X_2^n|\omega C_{[1:i-1]}X_1^n)\n&\qquad
+ I(\omega C_1;Y_{[1:2]}^n|X_{[1:2]}^n)+\sum_{i:odd\atop i>1}I(C_i;Y_{[1:2]}^n|\omega C_{[1:i-1]}X_{[1:2]}^n)+
\sum_{i:even}I(C_i;Y_{[1:2]}^n|\omega C_{[1:i-1]}X_{[1:2]}^n)\n
&= I(\omega C_1;X_1^n|X_2^n)+\sum_{i:odd\atop i>1}I(C_i;X_1^n|\omega C_{[1:i-1]}X_2^n)+\sum_{i:even}I(C_i;X_2^n|\omega C_{[1:i-1]}X_1^n)\n&\qquad
+ I(\omega C_1;Y_{[1:2]}^n|X_{[1:2]}^n)+\sum_{i>1}I(C_i;Y_{[1:2]}^n|\omega C_{[1:i-1]}X_{[1:2]}^n)\n
&= I(\omega C_1;X_1^n|X_2^n)+\sum_{i>1}I(C_i;X_1^n|\omega C_{[1:i-1]}X_2^n)+\sum_{i}I(C_i;X_2^n|\omega C_{[1:i-1]}X_1^n)\n&\qquad
+ I(\omega C_{[1:r]};Y_{[1:2]}^n|X_{[1:2]}^n)
\n&=I(\omega C_{[1:r]};X_1^n|X_2^n)+I(\omega C_{[1:r]};X_2^n|X_1^n)+I(\omega C_{[1:r]};Y_{[1:2]}^n|X_{[1:2]}^n)\n
&\ge n(I(\F;X_{1,Q}|X_{2,Q})+I(\F;X_{2,Q}|X_{1,Q})+I(\F;Y_{[1:2],Q}|X_{[1:2],Q})-3g(\epsilon))\label{eq:b51}
\end{align}
where the first term of \eqref{eq:b51} follows from equality of equations \eqref{eq:a2} and \eqref{eq:a4}, second term follows similarly and the last term follows from an argument similar to the one given in deriving equation \eqref{eq:b50}.
\end{proof}
\subsection{Continuity of $\ms_{\epsilon}(r)$ at $\epsilon=0$}
\begin{lemma}
\[
\ms(r)=\bigcap_{\epsilon>0}\ms_{\epsilon}(r).
\]
\end{lemma}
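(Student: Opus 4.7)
The statement decomposes into two inclusions. The direction $\ms(r)\subseteq\bigcap_{\epsilon>0}\ms_\epsilon(r)$ is immediate: any witness pmf $p\in T(r)$ trivially certifies membership in $T_\epsilon(r)$ (the total variation to $q(x_{[1:2]})q(y_{[1:2]}|x_{[1:2]})$ is zero, the Markov chains coincide, and the cardinality bounds of $T_\epsilon(r)$ are looser on $|\mf_1|$ but stricter overall, so by dropping the cardinality constraints up front as done in the previous lemma this is not an issue), and the rate inequalities of $\ms(r)$ imply those of $\ms_\epsilon(r)$ since $-3g(\epsilon)\le 0$. Hence it suffices to establish the reverse inclusion.

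Fix $(R_0,R_{12},R_{21})\in\bigcap_{\epsilon>0}\ms_\epsilon(r)$. For every $k\in\mathbb{N}$ choose a witness pmf $p_k\in T_{1/k}(r)$ certifying membership in $\ms_{1/k}(r)$. The key observation is that the cardinality bounds in the definition of $T_{1/k}(r)$ limit each $|\mf_i|$ by a constant depending only on $|\mx_1|,|\mx_2|,|\my_1|,|\my_2|$ and $r$, uniformly in $k$. Thus, after embedding all the $p_k$ into a common finite alphabet, they lie in a compact finite-dimensional probability simplex, and by Bolzano-Weierstrass we may pass to a subsequence along which $p_k\to p^*$ entrywise.

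I would then verify $p^*\in T(r)$. The Markov chains in the definition of $T(r)$ are each equivalent to finitely many polynomial equalities in the entries of the joint pmf; they are closed conditions and pass to the limit. The marginal equality $p^*(x_{[1:2]},y_{[1:2]})=q(x_{[1:2]})q(y_{[1:2]}|x_{[1:2]})$ follows by letting $k\to\infty$ in the total-variation bound $<1/k$. The alphabet of $p^*$ inherits the sizes allowed for $T_{1/k}(r)$, which are no larger than those allowed in $T(r)$. The four rate inequalities defining $\ms(r)$ then follow from those of $\ms_{1/k}(r)$ upon letting $k\to\infty$, using continuity of the finitely many mutual information functionals on the simplex together with $g(1/k)\to 0$ (since both $h_b(\epsilon)$ and $\epsilon\log|\my_{[1:2]}|$ vanish as $\epsilon\to 0$).

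The only delicate point is arranging the uniform boundedness of the auxiliary alphabets before invoking compactness; this is supplied precisely by the Fenchel-Carath\'eodory cardinality bounds built into $T_\epsilon(r)$. The rest is a routine continuity/closed-condition limit argument; in particular nothing new needs to be said about the coding scheme or the Markov structure, only about the topology of the finite-dimensional simplex on which the auxiliaries live.
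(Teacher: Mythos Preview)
Your proposal is correct and follows essentially the same approach as the paper: both directions are handled identically, with the reverse inclusion established by extracting a convergent subsequence of witness pmfs (using the cardinality bounds for compactness) and invoking continuity of total variation and mutual information on the finite-dimensional simplex, together with $g(\epsilon)\to 0$. One minor slip: the cardinality bound on $|\mf_1|$ in $T_\epsilon(r)$ is actually \emph{stricter} than in $T(r)$ (it is $+1$ versus $+3$), not looser as you wrote, but you correctly note that the Fenchel--Carath\'eodory relaxation from the preceding lemma renders this harmless.
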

\begin{proof}
It is clear that $\ms(r)\subseteq\bigcap_{\epsilon>0}\ms_{\epsilon}(r)$. We now prove the reverse direction, i.e., $\bigcap_{\epsilon>0}\ms_{\epsilon}(r)\subseteq\ms(r)$. To show this, we take a vanishing sequence $\{\epsilon_k\}_{k\ge1}$. Take a point $\mathbf{R}^*=(R^*_0,R^*_{12},R^*_{21})$ in $\cap_{k\ge1}\ms_{\epsilon_k}(r)$. Corresponding to this point is a sequence of pmfs $p_k(f_{1:r},x_{[1:2]},y_{[1:2]})\in T_{\epsilon_k}(r)$. Since these pmfs  belong to the probability simplex $\Delta^{|\mf_{[1:r]}||\mx_{[1:2]}||\my_{[1:2]}|}$ and the probability simplex is compact (due to the cardinality bounds on $\mf_i,~1\le i\le r$, there exists a sequence $\{{i_k}\}_{k\ge1}$ such that the sequence $p_{i_k}(f_{1:r},x_{[1:2]},y_{[1:2]})$ converges to some $p^*(f_{1:r},x_{[1:2]},y_{[1:2]})$ in the probability simplex. $p^*(f_{1:r},x_{[1:2]},y_{[1:2]})$ must belong to $T(r)$, because total variation distance and mutual information function are continuous in the probability simplex. In particular, we have 
\begin{small}\begin{align*}
 \tv{p^*(x_{[1:2]},y_{[1:2]})-q(x_{[1:2]},y_{[1:2]})}=\lim_{k\rightarrow\infty}\tv{p_{i_k}(x_{[1:2]},y_{[1:2]})-q(x_{[1:2]},y_{[1:2]})}=0&\Rightarrow p^*(x_{[1:2]},y_{[1:2]})=q(x_{[1:2]},y_{[1:2]}),\\
 \mbox{$i$ is odd:}~~I_{p^*}(F_i;X_2|F_{[1:i-1}X_1)=\lim_{k\rightarrow\infty}I_{p_{i_k}}(F_i;X_2|F_{[1:i-1}X_1)=0&\Rightarrow F_i-F_{[1:i-1]}X_1-X_2,\\
 \mbox{$i$ is even:}~~I_{p^*}(F_i;X_1|F_{[1:i-1}X_2)=\lim_{k\rightarrow\infty}I_{p_{i_k}}(F_i;X_1|F_{[1:i-1}X_2)=0&\Rightarrow F_i-F_{[1:i-1]}X_2-X_1,\\
 I_{p^*}(Y_1;X_2Y_2|F_{[1:r]}X_1)=\lim_{k\rightarrow\infty}I_{p_{i_k}}(Y_1;X_2Y_2|F_{[1:r]}X_1)=0&\Rightarrow Y_1-F_{[1:r]}X_1-X_2Y_2,\\
 I_{p^*}(Y_2;X_1Y_1|F_{[1:r]}X_2)=\lim_{k\rightarrow\infty}I_{p_{i_k}}(Y_2;X_1Y_1|F_{[1:r]}X_2)=0&\Rightarrow Y_2-F_{[1:r]}X_2-X_1Y_1.
\end{align*}
\end{small}
Further one can show that $\mathbf{R}^*$ is a point of $\ms(r)$ corresponded to the pmf $p^*(f_{[1:r]},x_{[1:2]},y_{[1:2]})$. This is because $\lim_{\epsilon\rightarrow 0}g(\epsilon)=0$ and the mutual information terms defining the set $\ms_{\epsilon_{i_k}}(r)$ tends to the ones corresponded to $p^*(f_{[1:r]},x_{[1:2]},y_{[1:2]})$. This concludes the proof. 
\end{proof}

\appendix
\section{Inductive proof of the approximation \eqref{eqn:apx-A}}\label{apx:a}
In this appendix we find the constraints that imply 
\begin{align*}
P(x_{[1:2]}^n,f_{[1:r]}^n,b_{[1:r]}, k_{[1:r]}, \omega_{[1:r]}, \hat{f}_{[1:r],\ta}^n, \hat{f}_{[1:r],\tb}^n)\apx{\epsilon_n}
\widehat{P}(x_{[1:2]}^n,f_{[1:r]}^n,b_{[1:r]}, k_{[1:r]}, \omega_{[1:r]}, \hat{f}_{[1:r],\ta}^n, \hat{f}_{[1:r],\tb}^n).
\end{align*}
Let $Z_0=X_{[1:2]}^n$ and $Z_j=(F_j^n, B_j, K_j, \omega_j, \hat{F}_{j,\ta}^n, \hat{F}_{j,\tb}^n)$ for $j\in[1:r]$. For any $j\in[0:r]$ we inductively find constraints that imply
\begin{align}\label{eq-apx:1}
P(Z_{[1:j]})\apx{\epsilon_n^{(j)}}
\widehat{P}(Z_{[1:j]}),
\end{align}
for some $\epsilon_n^{(j)}$ converging to zero as $n\rightarrow\infty$. 

Let us define a new random pmf $\widetilde{P}$ by changing one of the terms in the expansion of the pmf $P$ of the protocol A given in \eqref{eq:3}. We replace the Slepian-Wolf terms with one that corresponds to an ideal zero probability of error.
\begin{align}
\widetilde{P}(x_{[1:2]}^n,&f_{[1:r]}^n,b_{[1:r]}, k_{[1:r]}, \omega, y_{[1:2]}^n, \hat{f}_{[1:r],\ta}^n, \hat{f}_{[1:r],\tb}^n)\n&=
p(x_{[1:2]}^n)
\left[\prod_{i=1}^rP(b_i,\omega_i|f_{[1:i-1]}^n, x_{(i)_2}^n)P(f_i^n,k_i|b_i,\omega_i,f_{[1:i-1]}^n, x_{(i)_2}^n)\right.\n&\qquad\qquad\qquad\left. \mathbf{1}\{\hat{f}^n_{i,\t{i+1}}=f^n_{i}\}\mathbf{1}\{\hat{f}^n_{i,\t{i}}=f^n_{i}\}\right] {p}(y_1^n|f_{[1:r]}^nx_1^n)p(y_2^n|f_{[1:r]}^nx_2^n).\label{eq:apxa0}
\end{align}  
In order to show that the pmfs $P$ and $\widehat{P}$ in \eqref{eq-apx:1} are close, we show that both are close to $\widetilde{P}$. Therefore they have to be also close to each other because of the triangle inequality. In other words we will inductively find constraints that imply 
\begin{align}\label{eq-apx:2}
P(Z_{[1:k]})\apx{\epsilon_n^{(k)}}\widetilde{P}(Z_{[1:k]}),\qquad \mbox{for $k\in[0:r]$},\n
\widehat{P}(Z_{[1:k]})\apx{\epsilon_n^{(k)}}\widetilde{P}(Z_{[1:k]}),\qquad \mbox{for $k\in[0:r]$},
\end{align}
for some $\epsilon_n^{(k)}$ converging to zero as $n\rightarrow\infty$. For $j=0$ this is trivial since it reduces to $P(x_{[1:2]}^n)=p(x_{[1:2]}^n)=\widehat{P}(x_{[1:2]}^n)=\widetilde{P}(x_{[1:2]}^n)$. Suppose that \eqref{eq-apx:2} holds for $k=j-1$. To show it for $k=j$ we proceed as follows. First observe that it suffices to prove the existence of a sequence $\delta_n\rightarrow 0$ such that
\begin{subequations}\label{eq:apxa2.5}\begin{align}
\widetilde{P}(Z_{[0:j]})=\widetilde{P}(Z_{[0:j-1]})\widetilde{P}(Z_j|Z_{[0:j-1]})&\apx{\delta_n}\widetilde{P}(Z_{[1:j-1]}){P}(Z_j|Z_{[0:j-1]}),\label{eq:apxa4}\\
\widetilde{P}(Z_{[0:j]})=\widetilde{P}(Z_{[0:j-1]})\widetilde{P}(Z_j|Z_{[0:j-1]})&\apx{\delta_n}\widetilde{P}(Z_{[0:j-1]})\widehat{P}(Z_j|Z_{[0:j-1]}),\label{eq:apxa3}
\end{align}
because the third part of Lemma \ref{le:total} then yields that
\begin{align*}
P(Z_{[0:j]})&\apx{\epsilon_n^{(j)}}
\widetilde{P}(Z_{[0:j]}),\\
\widehat{P}(Z_{[0:j]})&\apx{\epsilon_n^{(j)}}\widetilde{P}(Z_{[0:j]}),
\end{align*}
where $\epsilon_n^{(j)}=\epsilon_n^{(j-1)}+\delta_n$. Next, note that the triangle inequality implies that instead of showing  \eqref{eq:apxa3} one can show \eqref{eq:apxa4} and \eqref{eq:apxa13} given below\begin{align}
\widetilde{P}(Z_{[1:j-1]}){P}(Z_j|Z_{[0:j-1]})\apx{\delta_n}\widetilde{P}(Z_{[0:j-1]})\widehat{P}(Z_j|Z_{[0:j-1]}).\label{eq:apxa13}
\end{align}
\end{subequations}
Therefore it suffices to show \eqref{eq:apxa4} and \eqref{eq:apxa13}. 

We begin by finding the expressions for the terms appearing in \eqref{eq:apxa2.5}.
The marginal pmf $\widetilde{P}(Z_{[0:j-1]})$ (computed from equation \eqref{eq:apxa0}) is as follows: \begin{align}
\widetilde{P}(Z_{[0:j-1]})
                                    &=p(x_{[1:2]}^n)
\left[\prod_{i=1}^{j-1}P(b_i,\omega_i|f_{[1:i-1]}^n, x_{(i)_2}^n)P(f_i^n,k_i|b_i,\omega_i,f_{[1:i-1]}^n, x_{(i)_2}^n)\right.\n&\qquad\qquad\qquad\qquad\qquad\qquad\qquad\qquad\left.\times \mathbf{1}\{\hat{f}^n_{i,\t{i+1}}=f^n_{i}\}\mathbf{1}\{\hat{f}^n_{i,\t{i}}=f^n_{i}\}\right]\label{eq:apxa-2}\\
&=p(x_{[1:2]}^n)
\left[\prod_{i=1}^{j-1}P(f_i^n|f_{[1:i-1]}^n, x_{(i)_2}^n)P(b_i,\omega_i,k_i|f_{[1:i]}^n, x_{(i)_2}^n)\right.\n&\qquad\qquad\qquad\qquad\qquad\qquad\qquad\qquad\left.\times \mathbf{1}\{\hat{f}^n_{i,\t{i+1}}=f^n_{i}\}\mathbf{1}\{\hat{f}^n_{i,\t{i}}=f^n_{i}\}\right]\n
&=p(x_{[1:2]}^n)
\left[\prod_{i=1}^{j-1}p(f_i^n|f_{[1:i-1]}^n, x_{(i)_2}^n)P(b_i,\omega_i,k_i|f_{[1:i]}^n)\right.\n&\qquad\qquad\qquad\qquad\qquad\qquad\qquad\qquad\left.\times \mathbf{1}\{\hat{f}^n_{i,\t{i+1}}=f^n_{i}\}\mathbf{1}\{\hat{f}^n_{i,\t{i}}=f^n_{i}\}\right]\label{eq:apxa-1.5}\\
&=p(x_{[1:2]}^n,f_{[1:j-1]}^n)
\left[\prod_{i=1}^{j-1}P(b_i,\omega_i,k_i|f_{[1:i]}^n)\right.\n&\qquad\qquad\qquad\qquad\qquad\qquad\qquad\qquad\left.\times \mathbf{1}\{\hat{f}^n_{i,\t{i+1}}=f^n_{i}\}\mathbf{1}\{\hat{f}^n_{i,\t{i}}=f^n_{i}\}\right]
\label{eq:apxa-2}
,\end{align}
where equation \eqref{eq:apxa-1.5} follows from the fact that $(b_i,\omega_i,k_i)$ are (random) bin indices of $f_{[1:i]}^n$ in Protocol A, and that $(F_i^n,F_{[1:i-1]}^n, X_{(i)_2}^n)$ have an i.i.d. pmf in the same protocol. Equation \eqref{eq:apxa-2} follows from the Markov conditions on $X_{[1:2]}^n, F_{[1:j-1]}^n$ in $T(r)$.

There are three conditional pmfs in \eqref{eq:apxa2.5} that can be computed from equations \eqref{eq:apxa0}, \eqref{eq:a3}, \eqref{eq:a4} respectively as follows:
\begin{align}
\widetilde{P}(Z_j|Z_{[0:j-1]})&=P(b_j,\omega_j|f_{[1:j-1]}^n, x_{(j)_2}^n)P(f_j^n,k_j|b_j,\omega_j,f_{[1:j-1]}^n, x_{(j)_2}^n)\n&\quad\qquad\qquad\qquad\qquad\times \mathbf{1}\{\hat{f}^n_{j,\t{j+1}}=f^n_{j}\}\mathbf{1}\{\hat{f}^n_{j,\t{j}}=f^n_{j}\},\label{eq:apx5}
\\
{P}(Z_j|Z_{[0:j-1]})&=P(b_j,\omega_j|f_{[1:j-1]}^n, x_{(j)_2}^n)P(f_j^n,k_j|b_j,\omega_j,f_{[1:j-1]}^n, x_{(j)_2}^n)\n&\quad\qquad\qquad\qquad\qquad\times P^{S-W}(\hat{f}_{j,\t{j+1}}^n|b_j,k_j,\omega_j,f_{[1:j-1]}^n,x^n_{(j+1)_2})\mathbf{1}\{\hat{f}^n_{j,\t{j}}=f^n_{j}\}\label{eq:apx7},
\\
\widehat{P}(Z_j|Z_{[0:j-1]})&=p^U(\omega_j)p^U(b_{j}) P(f_j^n,k_j|b_j,\omega_j,\hat{f}_{[1:j-1],\t{j}}^n, x_{(j)_2}^n)\n&\quad\qquad\qquad\qquad\qquad \times P^{S-W}(\hat{f}_{j,\t{j+1}}^n|b_j,k_j,\omega_j,\hat{f}_{[1:j-1],\t{j+1}}^n,x^n_{(j+1)_2})\mathbf{1}\{\hat{f}^n_{j,\t{j}}=f^n_{j}\}.\label{eq:apx6}
\end{align}

\emph{Finding sufficient conditions for equation \eqref{eq:apxa4} to hold:}

We begin by showing equation \eqref{eq:apxa4}. Note that the only difference in the two pmf expressions is that the Slepian-Wolf term in \eqref{eq:apx7} is replaced with an indicator function in \eqref{eq:apx5}. To use Slepian-Wolf theorem we need to show that we are dealing with an i.i.d. scenario where random bin indices are transmitted from one party to another party. Let us rewrite equations \eqref{eq:apx7} and \eqref{eq:apx5} as follows:
\begin{align}
\widetilde{P}(Z_j|Z_{[0:j-1]})&=P(f_j^n|f_{[1:j-1]}^n, x_{(j)_2}^n)
P(b_j,\omega_j,k_j|f_{[1:j]}^n, x_{(j)_2}^n)\n&\quad\qquad\qquad\qquad\qquad\times \mathbf{1}\{\hat{f}^n_{j,\t{j+1}}=f^n_{j}\}\mathbf{1}\{\hat{f}^n_{j,\t{j}}=f^n_{j}\}\n&
=p(f_j^n|f_{[1:j-1]}^n, x_{(j)_2}^n)
P(b_j,\omega_j,k_j|f_{[1:j]}^n)\n&\quad\qquad\qquad\qquad\qquad\times \mathbf{1}\{\hat{f}^n_{j,\t{j+1}}=f^n_{j}\}\mathbf{1}\{\hat{f}^n_{j,\t{j}}=f^n_{j}\},\label{eq:apx8}
\\
{P}(Z_j|Z_{[0:j-1]})&=P(f_j^n|f_{[1:j-1]}^n, x_{(j)_2}^n)
P(b_j,\omega_j,k_j|f_{[1:j]}^n, x_{(j)_2}^n)\n&\quad\qquad\qquad\qquad\qquad\times P^{S-W}(\hat{f}_{j,\t{j+1}}^n|b_j,k_j,\omega_j,f_{[1:j-1]}^n,x^n_{(j+1)_2})\mathbf{1}\{\hat{f}^n_{j,\t{j}}=f^n_{j}\}
\n&=p(f_j^n|f_{[1:j-1]}^n, x_{(j)_2}^n)
P(b_j,\omega_j,k_j|f_{[1:j]}^n)\n&\quad\qquad\qquad\qquad\qquad\times P^{S-W}(\hat{f}_{j,\t{j+1}}^n|b_j,k_j,\omega_j,f_{[1:j-1]}^n,x^n_{(j+1)_2})\mathbf{1}\{\hat{f}^n_{j,\t{j}}=f^n_{j}\}
\label{eq:apx9}.
\end{align}
We now compute $\widetilde{P}(Z_{[0:j-1]})\widetilde{P}(Z_j|Z_{[0:j-1]})$ and $\widetilde{P}(Z_{[0:j-1]}){P}(Z_j|Z_{[0:j-1]})$ using equation \eqref{eq:apxa-2} as follows:
 \begin{align}
\widetilde{P}(Z_{[0:j-1]})\widetilde{P}(Z_j|Z_{[0:j-1]})&=p(x_{[1:2]}^n,f_{[1:j]}^n)
\left[\prod_{i=1}^{j-1}P(b_i,\omega_i,k_i|f_{[1:i]}^n) \mathbf{1}\{\hat{f}^n_{i,\t{i+1}}=f^n_{i}\}\mathbf{1}\{\hat{f}^n_{i,\t{i}}=f^n_{i}\}\right]\\
&\quad\times P(b_j,\omega_j,k_j|f_{[1:j]}^n)\mathbf{1}\{\hat{f}^n_{j,\t{j}}=f^n_{j}\}
\n&\quad\times\mathbf{1}\{\hat{f}^n_{i,\t{i+1}}=f^n_{i}\}\label{eq:apxa10},\\
\widetilde{P}(Z_{[0:j-1]}){P}(Z_j|Z_{[0:j-1]})&=p(x_{[1:2]}^n,f_{[1:j]}^n)
\left[\prod_{i=1}^{j-1}P(b_i,\omega_i,k_i|f_{[1:i]}^n) \mathbf{1}\{\hat{f}^n_{i,\t{i+1}}=f^n_{i}\}\mathbf{1}\{\hat{f}^n_{i,\t{i}}=f^n_{i}\}\right]\n
&\quad\times 
P(b_j,\omega_j,k_j|f_{[1:j]}^n)\mathbf{1}\{\hat{f}^n_{j,\t{j}}=f^n_{j}\}\n
&\quad\times
P^{S-W}(\hat{f}_{j,\t{j+1}}^n|b_j,k_j,\omega_j,f_{[1:j-1]}^n,x^n_{(j+1)_2})
\label{eq:apxa11}.
\end{align}
Using the first part of Lemma \ref{le:total} it suffices to show that
 \begin{align*}
&p(x_{[1:2]}^n,f_{[1:j]}^n)P(b_j,k_j,\omega_j|f_{[1:j]}^n)
\mathbf{1}\{\hat{f}^n_{i,\t{i+1}}=f^n_{i}\}\apx{\delta_n}\\&p(x_{[1:2]}^n,f_{[1:j]}^n)P(b_j,k_j,\omega_j|f_{[1:j]}^n)
P^{S-W}(\hat{f}_{j,\t{j+1}}^n|b_j,k_j,\omega_j,f_{[1:j-1]}^n,x^n_{(j+1)_2}).
\end{align*}
The above pmf corresponds to an Slepian-Wolf problem where the first party has i.i.d. repetitions $(f_{[1:j-1]}^n, x^n_{(j)_2})$ and the second party has i.i.d. repetitions $(f_{[1:j-1]}^n, x^n_{(j+1)_2})$. The first party creates i.i.d. repetitions $f_j^n$ and communicates random bin indices $b_j,k_j,\omega_j$ of $f_{[1:j]}^n$ to the second party. Using Lemma \ref{le:S-W} the above total variation is small as long as the following constraints hold: 
\begin{itemize}\item For $j=1$, $\omega_j$ is non-empty and the S-W decoding is reliable if,
\be
R_1+R_0+\tR_1\ge H(F_1|X_2).
\ee
\item
For $j\geq 2$ the S-W decoding is reliable if,
\be
R_j+\tR_j\ge H(F_{[1:j]}|X_{(j+1)_2}F_{[1:j-1]})=H(F_j|X_{(j+1)_2}F_{[1:j-1]}).
\ee
\end{itemize}

\emph{Finding sufficient conditions for equation \eqref{eq:apxa13} to hold:}

The pmf $\widetilde{P}(Z_{[0:j-1]})P(Z_j|Z_{[0:j-1]})$ was computed in equation \eqref{eq:apxa11}. We now compute  $\widetilde{P}(Z_{[0:j-1]})\widehat{P}(Z_j|Z_{[0:j-1]})$ using equations \eqref{eq:apxa-2} and \eqref{eq:apx6} as follows:
 {\allowdisplaybreaks\begin{align}
\widetilde{P}(Z_{[0:j-1]}){P}(Z_j|Z_{[0:j-1]})&=p(x_{[1:2]}^n,f_{[1:j-1]}^n)
\left[\prod_{i=1}^{j-1}P(b_i,\omega_i,k_i|f_{[1:i]}^n) \mathbf{1}\{\hat{f}^n_{i,\t{i+1}}=f^n_{i}\}\mathbf{1}\{\hat{f}^n_{i,\t{i}}=f^n_{i}\}\right]\n
&\quad\times 
p^U(\omega_j)p^U(b_{j}) P(f_j^n,k_j|b_j,\omega_j,\hat{f}_{[1:j-1],\t{j}}^n, x_{(j)_2}^n)\mathbf{1}\{\hat{f}^n_{j,\t{j}}=f^n_{j}\}\n
&\quad\times
P^{S-W}(\hat{f}_{j,\t{j+1}}^n|b_j,k_j,\omega_j,\hat{f}_{[1:j-1],\t{j+1}}^n,x^n_{(j+1)_2})\n
&=p(x_{[1:2]}^n,f_{[1:j-1]}^n)
\left[\prod_{i=1}^{j-1}P(b_i,\omega_i,k_i|f_{[1:i]}^n) \mathbf{1}\{\hat{f}^n_{i,\t{i+1}}=f^n_{i}\}\mathbf{1}\{\hat{f}^n_{i,\t{i}}=f^n_{i}\}\right]\n
&\quad\times 
p^U(\omega_j)p^U(b_{j}) P(f_j^n,k_j|b_j,\omega_j,f_{[1:j-1]}^n, x_{(j)_2}^n)\mathbf{1}\{\hat{f}^n_{j,\t{j}}=f^n_{j}\}\n
&\quad\times
P^{S-W}(\hat{f}_{j,\t{j+1}}^n|b_j,k_j,\omega_j,f_{[1:j-1]}^n,x^n_{(j+1)_2})\label{eq:apxa110}\\
&=p(x_{[1:2]}^n,f_{[1:j-1]}^n)
\left[\prod_{i=1}^{j-1}P(b_i,\omega_i,k_i|f_{[1:i]}^n) \mathbf{1}\{\hat{f}^n_{i,\t{i+1}}=f^n_{i}\}\mathbf{1}\{\hat{f}^n_{i,\t{i}}=f^n_{i}\}\right]\n
&\quad\times 
p^U(\omega_j)p^U(b_{j}) P(f_j^n|b_j,\omega_j,f_{[1:j-1]}^n, x_{(j)_2}^n)P(k_j|f_{[1:j]}^n)\mathbf{1}\{\hat{f}^n_{j,\t{j}}=f^n_{j}\}\n
&\quad\times
P^{S-W}(\hat{f}_{j,\t{j+1}}^n|b_j,k_j,\omega_j,f_{[1:j-1]}^n,x^n_{(j+1)_2}),\label{eq:apxa111}
\end{align}}
where \eqref{eq:apxa110} holds since $\hat{F}_{[1:j-1]}^n=F_{[1:j-1]}^n$ holds because of the \emph{indicator functions} in $\widetilde{P}(Z_{[0:j-1]})$; equation \eqref{eq:apxa111} holds since $k_j$ is a (random) bin index of $f_{[1:j]}^n$.

Let us compare \eqref{eq:apxa111} and \eqref{eq:apxa11}. We see that most of the terms are the same. Using the first part of Lemma \ref{le:total} it suffices to show that
 \begin{align}
&p(x_{[1:2]}^n,f_{[1:j-1]}^n)p(f_j^n|x_{[1:2]}^n,f_{[1:j-1]}^n)P(b_j,\omega_j|f_{[1:j]}^n)
\apx{\delta_n}\nonumber\\&p(x_{[1:2]}^n,f_{[1:j-1]}^n)p^U(\omega_j)p^U(b_{j}) P(f_j^n|b_j,\omega_j,f_{[1:j-1]}^n, x_{(j)_2}^n).\label{eqn:apxa112}
\end{align}
Note that 
\begin{align}
&p(x_{[1:2]}^n,f_{[1:j-1]}^n)p(f_j^n|x_{[1:2]}^n,f_{[1:j-1]}^n)P(b_j,\omega_j|f_{[1:j]}^n)
=\nonumber\\&p(x_{[1:2]}^n,f_{[1:j-1]}^n)P(b_j,\omega_j|f_{[1:j-1]}^n,x_{[1:2]}^n) P(f_j^n|b_j,\omega_j,f_{[1:j-1]}^n, x_{(j)_2}^n).\label{eqn:apxa113}
\end{align}
We note that $P(f_j^n|b_j,\omega_j,f_{[1:j-1]}^n, x_{(j)_2}^n)$ of the above equation is the one of Protocol A and used in Protocol B. Now, to show that \eqref{eqn:apxa112} holds it suffices to show the following equation because the first part of Lemma \ref{le:total}:
 \begin{align}
p(x_{[1:2]}^n,f_{[1:j-1]}^n)P(b_j,\omega_j|f_{[1:j-1]}^n,x_{[1:2]}^n)
\apx{\delta_n}p(x_{[1:2]}^n,f_{[1:j-1]}^n)p^U(\omega_j)p^U(b_{j}).\label{eqn:apxa114}
\end{align}
In other words we need to impose constraints that imply $(B_j,\omega_j)$ are mutually nearly independent of $(F_{[1:j-1]}^n,X_{[1:2]}^n)$. Substituting $T=1$, $X_1=F_{[1:i]}$ and $Y=X_{[1:2]}F_{[1:i-1]}$ in Theorem \ref{thm:re} yields that equation \eqref{eqn:apxa114} holds if

\begin{itemize}
\item For $j=1$ we have the constraint
\be
R_0+\tR_1< H(F_1|X_1X_2)=H(F_1|X_1),\label{eq:apx20}
\ee
\item
for $j\geq 2$ we have the constraints
\be
\tR_i<H(F_j|X_{[1:2]}F_{[1:j-1]})=H(F_j|X_{(j)_2}F_{[1:j-1]}),\label{eq:apx30}
\ee
where in \eqref{eq:apx20} and \eqref{eq:apx30} we use the Markov chain $F_j-X_{(j)_2}F_{[1:j-1]}-X_{(j+1)_2}$ for any $j$.

This completes the induction proof.
\end{itemize}

\section{Proof of the approximation \eqref{eqn:apx-B}}\label{apx:b}
In this appendix we show that the approximation 
\begin{align}\label{aq:apxb0}
\widehat{P}(x_{[1:2]}^n,f_{[1:r]}^n,b_{[1:r]}, k_{[1:r]}, \omega_{[1:r]}, \hat{f}_{[1:r],\ta}^n, \hat{f}_{[1:r],\tb}^n)\apx{\epsilon_n}
{P}(x_{[1:2]}^n,f_{[1:r]}^n,b_{[1:r]}, k_{[1:r]}, \omega_{[1:r]}, \hat{f}_{[1:r],\ta}^n, \hat{f}_{[1:r],\tb}^n),
\end{align}
implies 
\begin{align}\label{aq:apxb2}
\widehat{P}(x_{[1:2]}^n,f_{[1:r]}^n,b_{[1:r]}, k_{[1:r]}, \omega,y_{[1:2]}^n,\hat{f}_{[1:r],\ta}^n, \hat{f}_{[1:r],\tb}^n)\apx{\tilde{\epsilon}_n}
{P}(x_{[1:2]}^n,f_{[1:r]}^n,b_{[1:r]}, k_{[1:r]}, \omega,y_{[1:2]}^n,\ \hat{f}_{[1:r],\ta}^n, \hat{f}_{[1:r],\tb}^n),
\end{align}
for some sequence $\tilde{\epsilon}_n\rightarrow 0$.

We prove it indirectly through the random pmf $\widetilde{P}$ introduced in \eqref{eq:apxa0}. It has been shown in Appendix \ref{apx:a} that in addition to the approximation \eqref{aq:apxb0} the following approximation holds
\begin{align}\label{aq:apxb1}
P(x_{[1:2]}^n,f_{[1:r]}^n,b_{[1:r]}, k_{[1:r]}, \omega_{[1:r]}, \hat{f}_{[1:r],\ta}^n, \hat{f}_{[1:r],\tb}^n)\apx{\epsilon_n}
\widetilde{P}(x_{[1:2]}^n,f_{[1:r]}^n,b_{[1:r]}, k_{[1:r]}, \omega_{[1:r]}, \hat{f}_{[1:r],\ta}^n, \hat{f}_{[1:r],\tb}^n).
\end{align}
Note that the triangle inequality implies that instead of  showing the approximation \eqref{aq:apxb2} one can show the following approximations
\begin{align}
P(x_{[1:2]}^n,f_{[1:r]}^n,b_{[1:r]}, k_{[1:r]}, \omega,y_{[1:2]}^n,\hat{f}_{[1:r],\ta}^n, \hat{f}_{[1:r],\tb}^n)&\apx{\tilde{\epsilon}_n}
\widetilde{P}(x_{[1:2]}^n,f_{[1:r]}^n,b_{[1:r]}, k_{[1:r]}, \omega,y_{[1:2]}^n, \hat{f}_{[1:r],\ta}^n, \hat{f}_{[1:r],\tb}^n),\n
\widehat{P}(x_{[1:2]}^n,f_{[1:r]}^n,b_{[1:r]}, k_{[1:r]}, \omega,y_{[1:2]}^n,\hat{f}_{[1:r],\ta}^n, \hat{f}_{[1:r],\tb}^n)&\apx{\tilde{\epsilon}_n}
\widetilde{P}(x_{[1:2]}^n,f_{[1:r]}^n,b_{[1:r]}, k_{[1:r]}, \omega,y_{[1:2]}^n, \hat{f}_{[1:r],\ta}^n, \hat{f}_{[1:r],\tb}^n).
\end{align}
Using the third part of Lemma \ref{le:total}, it suffices to prove the following approximations
\begin{align}
&\widetilde{P}(x_{[1:2]}^n,f_{[1:r]}^n,b_{[1:r]}, k_{[1:r]}, \omega,\hat{f}_{[1:r],\ta}^n, \hat{f}_{[1:r],\tb}^n)
P(y_{[1:2]}^n|x_{[1:2]}^n,f_{[1:r]}^n,b_{[1:r]}, k_{[1:r]}, \omega,\hat{f}_{[1:r],\ta}^n, \hat{f}_{[1:r],\tb}^n)\n&\qquad\quad\apx{\tilde{\epsilon}_n}
\widetilde{P}(x_{[1:2]}^n,f_{[1:r]}^n,b_{[1:r]}, k_{[1:r]}, \omega,y_{[1:2]}^n,\ \hat{f}_{[1:r],\ta}^n, \hat{f}_{[1:r],\tb}^n),\n
&\widetilde{P}(x_{[1:2]}^n,f_{[1:r]}^n,b_{[1:r]}, k_{[1:r]}, \omega,\hat{f}_{[1:r],\ta}^n, \hat{f}_{[1:r],\tb}^n)
\widehat{P}(y_{[1:2]}^n|x_{[1:2]}^n,f_{[1:r]}^n,b_{[1:r]}, k_{[1:r]}, \omega,\hat{f}_{[1:r],\ta}^n, \hat{f}_{[1:r],\tb}^n)\n&\qquad\quad\apx{\tilde{\epsilon}_n}
\widetilde{P}(x_{[1:2]}^n,f_{[1:r]}^n,b_{[1:r]}, k_{[1:r]}, \omega,y_{[1:2]}^n,\ \hat{f}_{[1:r],\ta}^n, \hat{f}_{[1:r],\tb}^n).\label{aq:apxb4}
\end{align}
First observe that 
\begin{align*}
P(y_{[1:2]}^n|x_{[1:2]}^n,f_{[1:r]}^n,b_{[1:r]}, k_{[1:r]}, \omega,\hat{f}_{[1:r],\ta}^n, \hat{f}_{[1:r],\tb}^n)&=\widetilde{P}(y_{[1:2]}^n|x_{[1:2]}^n,f_{[1:r]}^n,b_{[1:r]}, k_{[1:r]}, \omega,\hat{f}_{[1:r],\ta}^n, \hat{f}_{[1:r],\tb}^n)\n&={p}(y_1^n|f_{[1:r]}^nx_1^n)p(y_2^n|f_{[1:r]}^nx_2^n).
\end{align*}
This equation gives the first approximation of \eqref{aq:apxb4} with \emph{equality}. 

Next using equation \eqref{eq:4} we get
\begin{align}
\widehat{P}(y_{[1:2]}^n|x_{[1:2]}^n,f_{[1:r]}^n,b_{[1:r]}, k_{[1:r]}, \omega,\hat{f}_{[1:r],\ta}^n, \hat{f}_{[1:r],\tb}^n)={p}(y_1^n|\hat{f}_{[1:r],\ta}^nx_1^n)p(y_2^n|\hat{f}_{[1:r],\tb}^nx_2^n).
\end{align}
Substituting this in the second equation of \eqref{aq:apxb4} gives the second approximation of \eqref{aq:apxb4} with \emph{equality} as follows
\begin{align}
\widetilde{P}(&x_{[1:2]}^n,f_{[1:r]}^n,b_{[1:r]}, k_{[1:r]}, \omega,\hat{f}_{[1:r],\ta}^n, \hat{f}_{[1:r],\tb}^n)
\widehat{P}(y_{[1:2]}^n|x_{[1:2]}^n,f_{[1:r]}^n,b_{[1:r]}, k_{[1:r]}, \omega,\hat{f}_{[1:r],\ta}^n, \hat{f}_{[1:r],\tb}^n)\n
&=\widetilde{P}(x_{[1:2]}^n,f_{[1:r]}^n,b_{[1:r]}, k_{[1:r]}, \omega,\hat{f}_{[1:r],\ta}^n, \hat{f}_{[1:r],\tb}^n){p}(y_1^n|\hat{f}_{[1:r],\ta}^nx_1^n)p(y_2^n|\hat{f}_{[1:r],\tb}^nx_2^n)\n
&=\widetilde{P}(x_{[1:2]}^n,f_{[1:r]}^n,b_{[1:r]}, k_{[1:r]}, \omega,\hat{f}_{[1:r],\ta}^n, \hat{f}_{[1:r],\tb}^n){p}(y_1^n|{f}_{[1:r]}^nx_1^n)p(y_2^n|{f}_{[1:r]}^nx_2^n)\label{eq:apxb}\\
&=\widetilde{P}(x_{[1:2]}^n,f_{[1:r]}^n,b_{[1:r]}, k_{[1:r]}, \omega,y_{[1:2]}^n,\ \hat{f}_{[1:r],\ta}^n, \hat{f}_{[1:r],\tb}^n).
\end{align}
where the equation \eqref{eq:apxb} is due to the equality $\hat{f}_{[1:r],\ta}^n=\hat{f}_{[1:r],\tb}^n=f^n_{[1:r]}$ which is a result of \emph{indicator functions} in the definition of $\tilde{P}$ in \eqref{eq:apxa0}. This completes the proof of the approximation \eqref{aq:apxb2}.

\section{Proofs of mutual information bounds}\label{apx:d}
\subsection{Generalized version  of \cite[Lemma 2.7]{csiszar}}
\begin{lemma}[Modified version of {\cite[Lemma 2.7]{csiszar}}, c.f. \cite{zhang},{\cite[Problem 3.10]{csiszar}}]\label{le:csiszar}
For any two pmfs  $p_X$ and $p_{\hat{X}}$ on the same alphabet $\mx$, we have

\be
\left|H(X)-H(\hat{X})\right|\le \tv{p_X-p_{\hat{X}}}\log(|\mx|-1)+h_b\left(\tv{p_X-p_{\hat{X}}}\right). 
\ee
\end{lemma}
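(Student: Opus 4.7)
The plan is to prove the inequality via a coupling argument combined with a Fano-type decomposition of the conditional entropy, which is the standard route to such refinements of the Csisz\'ar--K\"orner continuity lemma. Let $d := \tv{p_X - p_{\hat X}}$ be the quantity appearing on the right-hand side. First I would invoke the maximal coupling characterization of total variation: there exists a joint pmf $p_{X\hat X}(x,\hat x)$ with marginals $p_X$ and $p_{\hat X}$ such that $\Pr(X \neq \hat X)$ equals the total variation distance (up to the factor implicit in the paper's $\tv{\cdot}$ convention). Introduce the Bernoulli error indicator $E := \mathbf{1}(X \neq \hat X)$, whose entropy is $H(E) = h_b(d)$.

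The heart of the argument is then a Fano-like bound on $H(X \mid \hat X)$. Expanding via the chain rule,
\begin{equation*}
H(X \mid \hat X) \le H(E, X \mid \hat X) = H(E \mid \hat X) + H(X \mid \hat X, E).
\end{equation*}
Now $H(E \mid \hat X) \le H(E) = h_b(d)$, and the conditional term splits as
\begin{equation*}
H(X \mid \hat X, E) = \Pr(E=0)\cdot 0 + \Pr(E=1)\cdot H(X \mid \hat X, E=1) \le d\, \log(|\mx|-1),
\end{equation*}
where I have used that $E=0$ forces $X = \hat X$ (zero uncertainty), and that $E=1$ restricts $X$ to lie in $\mx \setminus \{\hat X\}$, a set of size at most $|\mx|-1$. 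This is precisely the step that replaces the $\log|\mx|$ appearing in the original Csisz\'ar--K\"orner Lemma 2.7 by the tighter $\log(|\mx|-1)$. By the symmetric argument (swapping $X$ and $\hat X$), the same upper bound holds for $H(\hat X \mid X)$.

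The proof concludes with the identity $H(X) - H(\hat X) = H(X \mid \hat X) - H(\hat X \mid X)$, obtained by expanding $H(X,\hat X)$ two ways via the chain rule. Since both conditional entropies are nonnegative, taking absolute values gives
\begin{equation*}
|H(X) - H(\hat X)| \le \max\{H(X \mid \hat X),\, H(\hat X \mid X)\} \le d\, \log(|\mx|-1) + h_b(d),
\end{equation*}
which is the claimed bound. The only subtle step is the coupling choice and verifying the bookkeeping of the paper's total variation convention so that $\Pr(X \neq \hat X)$ and $\tv{p_X - p_{\hat X}}$ are identified correctly; everything else is a routine application of entropy chain rules and the observation that conditioning on $\{X \neq \hat X\}$ shrinks the effective alphabet by one.
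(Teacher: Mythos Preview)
The paper does not actually provide a proof of this lemma: it is stated as a known sharpening of the Csisz\'ar--K\"orner continuity bound, with citations to \cite{zhang} and \cite[Problem~3.10]{csiszar}, and then immediately used (without proof) as the base case for the conditional extension in Lemma~\ref{le:gcsiszar}. So there is no proof in the paper to compare your argument against.

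That said, your coupling-plus-Fano argument is a standard and correct route to this bound. The only caveat, which you already flag, is the total variation convention: the maximal coupling yields $\Pr(X\neq\hat X)=\tfrac{1}{2}\sum_x |p_X(x)-p_{\hat X}(x)|$, so your identification $d=\Pr(X\neq\hat X)$ requires that the paper's $\tv{\cdot}$ follow the half-$L^1$ convention. The lemma as stated, with $h_b(\tv{p_X-p_{\hat X}})$ appearing without restriction, only makes sense for arguments in $[0,1]$, which is consistent with that reading.
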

We now state a conditional extension of this lemma.
\begin{lemma}\label{le:gcsiszar}
For any pmf $p_Y$ on $\my$ and any two conditional pmfs $p_{X|Y}$ and $p_{\widehat{X}|Y}$ on the same alphabet $\mx$ 
%
\be
\left|H(X|Y)-H(\hat{X}|Y)\right|\le \tv{p_Yp_{X|Y}-p_Yp_{\widehat{X}|Y}}\log(|\mx|-1)+h_b\left(\tv{p_Yp_{X|Y}-p_Yp_{\widehat{X}|Y}}\right). 
\ee
\end{lemma}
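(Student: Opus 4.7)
The plan is to derive the conditional bound by applying the unconditional Lemma \ref{le:csiszar} pointwise in $y$ and then taking an average weighted by $p_Y$. For each $y$ in the support of $p_Y$, the two conditional pmfs $p_{X|Y=y}$ and $p_{\widehat X|Y=y}$ live on $\mx$, so Lemma \ref{le:csiszar} immediately yields
\begin{equation*}
\bigl|H(X|Y=y)-H(\widehat X|Y=y)\bigr|\le d(y)\log(|\mx|-1)+h_b(d(y)),
\end{equation*}
where I abbreviate $d(y):=\tv{p_{X|Y=y}-p_{\widehat X|Y=y}}$.

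Next I would multiply both sides by $p_Y(y)$, sum over $y$, and bound the left and right sides separately. On the left, since $H(X|Y)-H(\widehat X|Y)=\sum_y p_Y(y)\bigl[H(X|Y=y)-H(\widehat X|Y=y)\bigr]$, the triangle inequality for absolute values gives
\begin{equation*}
\bigl|H(X|Y)-H(\widehat X|Y)\bigr|\le \sum_y p_Y(y)\,\bigl|H(X|Y=y)-H(\widehat X|Y=y)\bigr|.
\end{equation*}
For the linear-in-$d(y)$ term on the right, I would use the identity $\sum_y p_Y(y)\,d(y)=\tv{p_Y p_{X|Y}-p_Y p_{\widehat X|Y}}$, which is immediate from the definition of total variation applied to a joint pmf. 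For the $h_b(d(y))$ term, concavity of the binary entropy function on $[0,1]$ together with Jensen's inequality gives $\sum_y p_Y(y)\,h_b(d(y))\le h_b\bigl(\sum_y p_Y(y)\,d(y)\bigr)$. Combining these three observations is exactly the claimed inequality.

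The only step requiring care is the application of Jensen's inequality to $h_b$: I would note that $h_b$ is concave on $[0,1]$ and that the joint total variation is at most one in the normalization being used, so $h_b$ is well defined throughout the range. There is no real obstacle here, since the argument is a routine concavity-plus-averaging reduction from the unconditional version of Lemma \ref{le:csiszar}; essentially the same observation is noted in \cite[Problem 3.10]{csiszar}.
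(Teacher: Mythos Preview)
Your proposal is correct and follows essentially the same approach as the paper: apply Lemma \ref{le:csiszar} pointwise in $y$, average with weights $p_Y(y)$, use the triangle inequality on the left, the identity $\sum_y p_Y(y)\,d(y)=\tv{p_Yp_{X|Y}-p_Yp_{\widehat X|Y}}$ for the linear term, and Jensen's inequality for the concave $h_b$ term. The paper's proof is line-for-line the same argument.
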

\begin{proof}
\begin{align}
\left|H(X|Y)-H(\widehat{X}|Y)\right|&=\left|\sum_{y}p_Y(y)\left(H(X|Y=y)-H(\widehat{X}|Y=y)\right)\right|\n
                                             &\le\sum_{y}p_Y(y)\left|H(X|Y=y)-H(\widehat{X}|Y=y)\right|\n
                                             &\le \sum_{y}p_Y(y)\left[\tv{p_{X|Y=y}-p_{\hat{X}|Y=y}}\log(|\mx|-1)+h\left(\tv{p_{X|Y=y}-p_{\hat{X}|Y=y}}\right)\right]\label{eq:apxd0}\\
                                             &= \tv{p_Yp_{X|Y}-p_Yp_{\widehat{X}|Y}}\log(|\mx|-1)+\sum_{y}p_Y(y)h\left(\tv{p_{X|Y=y}-p_{\hat{X}|Y=y}}\right)\n
                                             &\le \tv{p_Yp_{X|Y}-p_Yp_{\widehat{X}|Y}}\log(|\mx|-1)+h_b\left(\sum_{y}p_Y(y)\tv{p_{X|Y=y}-p_{\hat{X}|Y=y}}\right)\label{eq:apxd1}\\
                                             &= \tv{p_Yp_{X|Y}-p_Yp_{\widehat{X}|Y}}\log(|\mx|-1)+h_b\left(\tv{p_Yp_{X|Y}-p_Yp_{\widehat{X}|Y}}\right),  \label{eq:apxd3}                                           
\end{align}
where \eqref{eq:apxd0} follows from Lemma \ref{le:csiszar} and \eqref{eq:apxd1} follows from Jensen inequality for the concave function $h_b$(.).\end{proof}
\subsection{Proof of Lemma \ref{le:c1}}
The proof is similar to the one given in \cite[Lemma VI. 3]{cuff-trans} for the unconditional case of the Lemma \ref{le:c1}. First we use the first part of Lemma \ref{le:total} to obtain the closeness of $p(w_q,z)$ and $\widehat{p}_q(w_q|z)p(z)$ in total variation. In other words we have
\[ \tv{p(w_q,z)-\widehat{p}_q(w_q|z)p(z)}\le \epsilon.\]
Let $\hat{W}^n$ be a random variable such that $p_{\hat{W}^n,Z}(w^n,z)=p(z)\prod_{q=1}^n p_q(w_q|z)$. Then Lemma \ref{le:gcsiszar} implies that
\begin{align}
\left|H(W_q|Z)-H(\hat{W}_q|Z)\right|&\le \epsilon\log|\mw|+h_b(\epsilon),\n
\left|H(W^n|Z)-H(\hat{W}^n|Z)\right|&\le \epsilon\log|\mw|^n+h_b(\epsilon)=  n\epsilon \log|\mw|+h_b(\epsilon).\label{eq:apxd-1}
\end{align}
Now we have
\begin{align}
\sum_{q=1}^n I(W_q;W^{q-1}|Z)&=\sum_{q=1}^n H(W_q|Z)- H(W^n|Z)\n
                                                  &=\sum_{q=1}^n \left[H(W_q|Z)-H(\hat{W}_q|Z)\right]+H(\hat{W}^n|Z)-H(W^n|Z)\label{eq:apxd4}\\
                                                  &\le 2n\epsilon\log|\mw|+(n+1)h_b(\epsilon) \label{eq:apxd5}
\end{align}
where \eqref{eq:apxd4} follows from the fact that $H(\hat{W}^n|Z)=\sum_{q=1}^n H(\hat{W}_q|Z)$, because of $p_{\hat{W}^n,Z}(w^n,z)=p(z)\prod_{q=1}^n p_q(w_q|z)$ and \eqref{eq:apxd5} is a result of \eqref{eq:apxd-1}. This concludes the proof.
\subsection{Proof of Lemma \ref{le:c2}}
The proof is similar to the proof of Lemma \ref{le:c1}. First, using the first part of Lemma \ref{le:total} we have
\[\tv{p(y_q,x_q)-\widehat{p}(y|x)p(x)}=\tv{p(y_q,x_q)-\widehat{p}(y_q|x_q)p(x_q)}\le \epsilon.\]
Let $\hat{Y}^n$ be a random variable such that $p_{X^n,\hat{Y}^n}(x^n,y^n)=\prod_{q=1}^n p(x_q)\widehat{p}(y_q|x_q)$. 
 Observe that $H(\hat{Y}_q|X^n)=H(\hat{Y}_q|X_q)=H(\hY|X)$, where $(X,\hY)$ is distributed according to $p(x)\hat{p}(y|x)$. Then Lemma \ref{le:gcsiszar} implies that 
\be
\begin{split}
\left|H(Y_q|X_q)-H(\hat{Y}_q|X_q)\right|&\le \epsilon\log|\my|+h_b(\epsilon),\\
\left|H(\hat{Y}_q|X^n)-H(Y_q|X^n)\right|&\le \epsilon\log|\my|+h_b(\epsilon).
\end{split}\label{eq:apxd20}
\ee
We have
\begin{align}
I(X_{[\sim q]};Y_q|X_q)&=H(Y_q|X_q)-H(Y_q|X^n)\n
                                   &\le \left|H(Y_q|X_q)-H(\hat{Y}_q|X_q)\right| +\left|H(\hat{Y}_q|X_q)-H(Y_q|X^n)\right|\n
                                   &=  \left|H(Y_q|X_q)-H(\hat{Y}_q|X_q)\right| +\left|H(\hat{Y}_q|X^n)-H(Y_q|X^n)\right|\n
                                   &\le 2\left(\epsilon\log|\my|+h_b(\epsilon)\right)
                                   ,\label{eq:apxd21}
\end{align}
where \eqref{eq:apxd21} follows from \eqref{eq:apxd20}. This completes the proof of \eqref{eq:le-c2-1}.

Next we prove \eqref{eq:le-c2-2}. First we note that $(X_Q,\hY_Q)$ is distributed according to $p(x)\hat{p}(y|x)$, because $(X^n,\hY^n)$ is jointly i.i.d. according to  $p(x)\hat{p}(y|x)$. Also, by \cite[Lemma VI.2]{cuff:synthesis} we have the closeness between $p(x_Q,y_Q)$ and $p(x)\hat{p}(y|x)$, that is, 
\[
\tv{p(x_Q,y_Q)-p(x)\hat{p}(y|x)}\le\epsilon.
\]
Then Lemma \ref{le:gcsiszar} implies that
\be
\left|H(Y_Q|X_Q)-H(\hat{Y}|X)\right|\le \epsilon\log|\my|+h_b(\epsilon).\label{eq:apxd22}
\ee
Next consider
\be
\begin{split}
\left|H(Y_Q|X_Q,Q)-H(\hY|X)\right|&=\sum_q p_Q(q)\left|H(Y_q|X_q)-H(\hY|X)\right|\\
                                                     &\le\sum_q p_Q(q) \left(\epsilon\log|\my|+h_b(\epsilon)\right)\\
                                                     &= \epsilon\log|\my|+h_b(\epsilon),
\end{split}
\label{eq:apxd23}
\ee
where we used \eqref{eq:apxd20}. Finally, combining \eqref{eq:apxd22} and \eqref{eq:apxd23} implies \eqref{eq:le-c2-2}.

\section{Cardinality Bounds}\label{apx:cardinality}
The cardinality bounds can be proved inductively using the support lemma \cite[Appendix C]{elgamal}. Here we provide the sketch of the proof.  Assume that we have reduced the cardinalities of $F_1, F_2, \cdots, F_{i-1}$. We prove a cardinality bound on $F_i$. For simplicity we only write the case of $i>1$; the case of $i=1$ is similar. Take some arbitrary $q(f_{[1:r]},x_{[1:2]},y_{[1:2]})$ with $F_j, j\in[1:r]$ taking values from finite (but arbitrarily large) sets. In the statement of the support lemma, we consider $\mathscr{P}$ to be the set of all pmfs 
$\pi(f_{[1:i-1]},f_{[i+1:r]},x_{[1:2]},y_{[1:2]})$
on 
$\mathcal{F}_{1}\times\mathcal{F}_{2}\times\cdots\times\mathcal{F}_{i-1}\times
\mathcal{F}_{i+1}\times\cdots\times\mathcal{F}_{r}\times\mathcal{X}_{1}\times\mathcal{X}_{2}\times\mathcal{Y}_{1}\times\mathcal{Y}_{2}$
that 
satisfy the following 
\begin{itemize}
\item $\pi(x_2|f_{[1:i-1]}x_1)=q(x_2|f_{[1:i-1]}x_1)$ if $i$ is odd; or $\pi(x_1|f_{[1:i-1]}x_2)=q(x_1|f_{[1:i-1]}x_2)$ if $i$ is even;
\item
For any $j>i$: 
 $F_j-F_{[1:i-1]}F_{[i+1:j-1]}X_1-X_2, \ \mbox{if $j$ is odd;}$ $F_j-F_{[1:i-1]}F_{[i+1:j-1]}X_2-X_1, \ \mbox{if $j$ is even; }
Y_1-F_{[1:i-1]}F_{[i+1:r]} X_1-X_2Y_2,$ and $Y_2-F_{[1:i-1]}F_{[i+1:r]} X_2-X_1Y_1$.
\end{itemize}
This set is compact and connected. To see its connectedness, for simplicity consider the special case of $r=3, i=2$; the proof for general case is similar. 
$\mathscr{P}$ is the set of $\pi(x_{[1:2]}y_{[1:2]}f_1f_3)$ that factorize as follows:
$$\pi(x_{[1:2]}y_{[1:2]}f_1f_3)=\pi(x_2f_{1})q(x_1|f_{1}x_2)\pi(f_3|x_1f_1)\pi(y_1|x_1f_{1}f_{3})\pi(y_2|x_2f_{1}f_{3}).$$
Given $\pi_1(x_{[1:2]}y_{[1:2]}f_1f_3)$ and $\pi_2(x_{[1:2]}y_{[1:2]}f_1f_3)$ of the above form, we can continously move from $\pi_1$ to $\pi_2$ in several steps, by first moving from $\pi_1(x_{[1:2]}y_{[1:2]}f_1f_3)$ to
$$\pi_2(x_2f_{1})q(x_1|f_{1}x_2)\pi_1(f_3|x_1f_1)\pi_1(y_1|x_1f_{1}f_{3})\pi_1(y_2|x_2f_{1}f_{3}).$$
where the first term $\pi_1(x_2f_{1})$ is replaced with $\pi_2(x_2f_{1})$ via continous moves. We can then replace the term $\pi_1(f_3|x_1f_1)$ with $\pi_2(f_3|x_1f_1)$, etc.

Then we consider the following continuous functions on $\mathscr{P}$. Given any $(x_{[1:2]},y_{[1:2]}, f_1,\cdots, f_{i-1})$ and pmf $\pi$ on $\mathscr{P}$ we define
$$g_{x_{[1:2]},y_{[1:2]},f_{[1:i-1]}}(\pi)=\mathbb{P}_{\pi}[X_{[1:2]}=x_{[1:2]}, Y_{[1:2]}=y_{[1:2]},  F_{[1:i-1]}=f_{[1:i-1]}].$$
Further we define three more functions:
\begin{align}
g_{1}(\pi)&=H(X_1|F_{1:i-1}F_{i+1:r}X_2), \\
g_{2}(\pi)&=H(X_2|F_{1:i-1}F_{i+1:r}X_1), \\
g_{3}(\pi)&=H(Y_{[1:2]}|F_{1:i-1}F_{i+1:r}X_{[1:2]}).
\end{align}
We consider $g_{x_{[1:2]},y_{[1:2]},f_{[1:i-1]}}$ for all values of $x_{[1:2]},y_{[1:2]},f_{[1:i-1]}$ except for one arbitrary tuple $(x^*_{[1:2]},y^*_{[1:2]},f^*_{[1:i-1]})$, giving us $|\mathcal{X}_{1}||\mathcal{X}_{2}||\mathcal{Y}_{1}||\mathcal{Y}_{2}|\prod_{j=1}^{i-1}|\mathcal{F}_{j}|-1$ functions. Thus in total we have $|\mathcal{X}_{1}||\mathcal{X}_{2}||\mathcal{Y}_{1}||\mathcal{Y}_{2}|\prod_{j=1}^{i-1}|\mathcal{F}_{j}|+2$ functions. Applying the support lemma, we can reduce the cardinality of $F_i$ to $|\mathcal{X}_{1}||\mathcal{X}_{2}||\mathcal{Y}_{1}||\mathcal{Y}_{2}|\prod_{j=1}^{i-1}|\mathcal{F}_{j}|+2$ by finding some $p(x_{[1:2]}, y_{[1:2]}, f_{[1:r]})$ such that
$$p(x_{[1:2]}, y_{[1:2]}, f_{[1:i-1]})=q(x_{[1:2]}, y_{[1:2]}, f_{[1:i-1]}),$$ 
$$H_{p}(X_1|F_{1:i-1}F_{i+1:r}X_2F_i)=H_{q}(X_1|F_{1:i-1}F_{i+1:r}X_2F_i)$$
$$H_p(X_2|F_{1:i-1}F_{i+1:r}X_1F_i)=H_q(X_2|F_{1:i-1}F_{i+1:r}X_1F_i)$$
and 
$$H_p(Y_{[1:2]}|F_{1:i-1}F_{i+1:r}X_{[1:2]}F_i)=H_q(Y_{[1:2]}|F_{1:i-1}F_{i+1:r}X_{[1:2]}F_i).$$
Further the resulting $p(x_{[1:2]}, y_{[1:2]}, f_{[1:i-1]},  f_{[i+1:r]}|f_i)$ is in $\mathscr{P}$, implying the Markov chain equations for $j\geq i$: $F_j-F_{[1:j-1]}X_1-X_2, \ \mbox{if $j$ is odd;} F_j-F_{[1:j-1]}X_2-X_1, \ \mbox{if $j$ is even;} 
Y_1-F_{[1:r]} X_1-X_2Y_2,$ and $Y_2-F_{[1:r]} X_2-X_1Y_1$. The first condition imposed on $\mathscr{P}$ implies the Markov chains for $j=i$, whereas the second condition implies it for $j>i$. Since we are preserving 
$p(x_{[1:2]}, y_{[1:2]}, f_{[1:i-1]})$, the chains also hold for $j<i$. Further we get that 
$H(X_1|X_2), H(X_2|X_1)$, $H(Y_{[1:2]}|X_{[1:2]})$ and $I(F_1;Y_{[1:2]}|X_{[1:2]})$ are preserved. 

\section{Rate elimination}\label{apx:elimination}
We eliminate the rates $(\tR_1,\cdots,\tR_r)$ and $(R_1,\cdots,R_r)$ in few steps.

\underline{Step 1: Relaxing the implicit positivity constraints on $\tR_i,i\in[1:r]$. } 
First we want to eliminate the rates $(\tR_1,\cdots,\tR_r)$ from \eqref{eq:c1}-\eqref{eq:c3V2} and \eqref{eq:c44}. However we also have the implicit constraints  $\tR_i\ge0, i\in[1:r]$. Nevertheless, we show that these constraints are redundant. To do this, we show that if $(R_0,R_1,\cdots,R_r,\tR_1,\cdots,\tR_r)$ satisfies \eqref{eq:c1}-\eqref{eq:c3V2} and \eqref{eq:c44} for some r.v.s $F_{[1:r]}$ and the rates$(\tR_1,\cdots,\tR_r)$ (which are not necessarily positive), then there exists a r.v.s $\bar{F}_{[1:r]}$ and $\bar{R}_i\ge 0, , i\in[1:r]$ such that $(R_0,R_1,\cdots,R_r,\bar{R}_1,\cdots,\bar{R}_r)$ satisfies \eqref{eq:c1}-\eqref{eq:c3V2} and \eqref{eq:c44} for $\bar{F}_{[1:r]}$ instead of $F_{[1:r]}$ and $\bar{R}_i$ instead of $\tR_i$.
 Let $W_i, i\in[1:r]$ be  r.v.s with entropies $H(W_i)>|\tR_i|$. Further assume that $W_i, i\in[1:r]$ are  independent of each other and also independent of all other r.v.'s, i.e. $(F_{[1:r]},X_{[1:2]},Y_{[1:2]})$. Let $\bar{R_i}=\tR_i+H(W_i)$ and $\bar{F_i}=(F_i,W_i)$. 
It is clear that $\bar{R}_i>0, \forall i$. Now it can easily shown that$(R_0,R_1,\cdots,R_r,\bar{R}_1,\cdots,\bar{R}_r)$ satisfies \eqref{eq:c1}-\eqref{eq:c3V2} and \eqref{eq:c44} for $\bar{F}_{[1:r]}$, using the independence of $W_{[1:r]}$ from all other r.v.'s and the  fact that $(R_0,R_1,\cdots,R_r,\tR_1,\cdots,\tR_r)$ satisfies  \eqref{eq:c1}-\eqref{eq:c3V2} and \eqref{eq:c44} for $F_{[1:r]}$.

\underline{Step 2: Eliminating the rates $\tR_i,i\in[1:r]$. } Without loss of generality, we can assume that the constraints \eqref{eq:c1} and \eqref{eq:c2} hold with equality, because we can decrease the rates $\tR_i, i\in[1:r]$ to get equality in the constraints \eqref{eq:c1} and \eqref{eq:c2} without disturbing the other constraints. In this case, we have
\begin{align*}
\tR_1&=H(F_1|X_2)-R_0-R_1,\\
\tR_i&=H(F_i|X_{(i+1)_2}F_{[1:i-1]}),~~ \mbox{for $i\in[2:r]$}.
\end{align*}  
Substituting these equalities in \eqref{eq:c3V1}, \eqref{eq:c3V2} and \eqref{eq:c44} gives the following constraints for $i\in[1:r]$,
\begin{align}
R_i&\ge I(X_{(i)_2};F_i|F_{[1:i-1]}X_{(i+1)_2}),\label{eq:constraint-1}\\
R_0+\sum_{t=1}^i R_t&\ge \sum_{t=1}^i I(F_t;X_{(t)_2}Y_{[1:2]}|X_{(t+1)_2}F_{[1:t-1]})\n
                                   &=\sum_{t=1}^i I(F_t;X_{(t)_2}|X_{(t+1)_2}F_{[1:t-1]})+\sum_{t=1}^i I(F_t;Y_{[1:2]}|X_{[1:2]}F_{[1:t-1]})\n
                                   &=I(F_{[1:i]};Y_{[1:2]}|X_{[1:2]})+\sum_{t=1}^i I(F_t;X_{(t)_2}|X_{(t+1)_2}F_{[1:t-1]}).\label{eq:constraint-2}
\end{align}
\underline{Step 3: Eliminating the rates $R_i,i\in[1:r]$. } In this step we want to eliminate the rates $(R_1,\cdots,R_r)$ from \eqref{eq:rate-split}, \eqref{eq:constraint-1} and \eqref{eq:constraint-2}. This can be done using Fourier-Motzkin elimination (FME). Applying FME gives the following constraints on $(R_0,R_{12},R_{21})$:
\begin{align}
R_{12}&\ge I(X_1;\F|X_2),\n
R_{21}&\ge I(X_2;\F|X_1),\n
R_0+R_{12}&\ge I(X_1;\F|X_2)+I(F_1;Y_{[1:2]}|X_{[1:2]}),\n
R_0+R_{12}+R_{21}&\ge I(X_1;\F|X_2)+I(X_2;\F|X_1) +I(F_{[1:i]};Y_{[1:2]}|X_{[1:2]}),~~~\mbox{for $i\in[2:r]$}.
\end{align}
Finally we note that the last constraints for $i\in[2:r-1]$ are redundant due to the constraint corresponding to $i=r$. 

\section{Proof of Markov chains in \eqref{eqn:apx-F}}\label{apx:c}
We know that for any code the following Markov chain conditions hold
\be\label{eq:apxc0}\begin{split}
C_i-&\omega C_{[1:i-1]}X_1^n-X_2^n, \ \mbox{if $i$ is odd,} \\
C_i-&\omega C_{[1:i-1]}X_2^n-X_1^n, \ \mbox{if $i$ is even,} \\
&Y_1^n-\omega C_{[1:r]} X_1^n-X_2^nY_2^n,\\
&Y_2^n-\omega C_{[1:r]} X_2^n-X_1^nY_1^n.
\end{split}\ee
The Markov chain $C_i-\omega C_{[1:i-1]}X_1^{q:n}X_2^{1:q-1}-X_{2,q}$ for odd $i$ holds because
\begin{align}
I(C_i;X_{2,q}|\omega C_{[1:i-1]}X_1^{q:n}X_2^{1:q-1})&\le I(C_i X_1^{1:q-1};X_{2,q}|\omega C_{[1:i-1]}X_1^{q:n}X_2^{1:q-1})\n
                                                                                     &=   I(X_1^{1:q-1};X_{2,q}|\omega C_{[1:i-1]}X_1^{q:n}X_2^{1:q-1})\label{eq:apxc1}\\
                                                                                     &\le I(X_1^{1:q-1};X_{2}^{q:n}|\omega C_{[1:i-1]}X_1^{q:n}X_2^{1:q-1})\n
                                                                                     &=   0, \label{eq:apxc2}
\end{align}
where \eqref{eq:apxc1} follows from the first Markov chain of \eqref{eq:apxc0}, and the \eqref{eq:apxc2} follows from Lemma \ref{le:markov} provided at the end of this appendix. 
Similarly the Markov chain $C_i-\omega C_{[1:i-1]}X_1^{q+1:n}X_2^{1:q}-X_{1,q}$ for even $i$ holds. Next, we show that the Markov chain
$Y_{1,q}-\omega C_{[1:r]}X_1^{q:n}X_2^{1:q-1}-X_{2,q}Y_{2,q}$ holds.
\begin{align}
I(Y_{1,q};X_{2,q}Y_{2,q}|\omega C_{[1:r]}X_1^{q:n}X_2^{1:q-1})&\le I(Y_{1,q}X_1^{1:q-1};X_{2,q}Y_{2,q}|\omega C_{[1:r]}X_1^{q:n}X_2^{1:q-1})\n
                                                                                                    &=   I(X_1^{1:q-1};X_{2,q}Y_{2,q}|\omega C_{[1:r]}X_1^{q:n}X_2^{1:q-1})\label{eq:apxc3}\\
                                                                                                    &\le I(X_1^{1:q-1};X_{2}^{q:n}Y_{2,q}|\omega C_{[1:r]}X_1^{q:n}X_2^{1:q-1})\n
                                                                                                    &=   I(X_1^{1:q-1};X_{2}^{q:n}|\omega C_{[1:r]}X_1^{q:n}X_2^{1:q-1})\label{eq:apxc4}\\
                                                                                                    &=   0,\label{eq:apxc5}
\end{align}
where \eqref{eq:apxc3} follows from the third Markov chain of \eqref{eq:apxc0}, \eqref{eq:apxc4} follows from the last Markov chain of \eqref{eq:apxc0} and the \eqref{eq:apxc5} follows from Lemma \ref{le:markov}.
Similarly the Markov chain $Y_{2,q}-\omega C_{[1:r]}X_1^{q+1:n}X_2^{1:q}-X_{1,q}Y_{1,q}$ holds.

\begin{lemma} \label{le:markov}For any set of random variables satisfying the Markov chain constraints of \eqref{eq:apxc0}, the following holds:
\be
                                              \forall q, i:\qquad            I(X_1^{1:q-1};X_{2}^{q:n}|\omega C_{[1:i]}X_1^{q:n}X_2^{1:q-1})=0. \label{eq:apxc30}
\ee

\end{lemma}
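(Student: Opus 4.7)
The plan is to prove the conditional independence in \eqref{eq:apxc30} by an explicit factorization of the joint density $p(\omega, X_1^n, X_2^n, C_{[1:i]})$. Specifically, I will show that this joint, viewed as a function of the free variables $(X_1^{1:q-1}, X_2^{q:n})$ with the conditioning variables $(\omega, C_{[1:i]}, X_1^{q:n}, X_2^{1:q-1})$ held fixed, splits as a product of one function depending only on $X_1^{1:q-1}$ and another depending only on $X_2^{q:n}$. This is the standard factorization criterion for conditional independence and immediately yields the vanishing of the mutual information in \eqref{eq:apxc30}.

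The first step is to write down the chain-rule expansion
\[
p(\omega, X_1^n, X_2^n, C_{[1:i]}) = p(\omega)\,\prod_{t=1}^{n} q(X_{1,t}, X_{2,t})\,\prod_{j=1}^{i} p(C_j \mid \omega, C_{[1:j-1]}, X_{a(j)}^n),
\]
where $a(j)=1$ for odd $j$ and $a(j)=2$ for even $j$. Three ingredients feed into this expansion: the independence of the common randomness $\omega$ from the sources, the i.i.d.\ structure of the source pairs $(X_{1,t}, X_{2,t})$, and the encoder Markov conditions \eqref{eq:apxc0}, which let me drop the opposite-terminal source from the conditioning of each $p(C_j\mid\cdot)$.

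With this expansion in hand, the second step is to inspect the dependence of each factor on the free variables. For $t<q$ the factor $q(X_{1,t},X_{2,t})$ depends only on $X_{1,t}$, since $X_{2,t}$ is a fixed component of $X_2^{1:q-1}$; for $t\ge q$ it depends only on $X_{2,t}$, since $X_{1,t}$ is fixed. Similarly, for odd $j$ the factor $p(C_j\mid \omega, C_{[1:j-1]}, X_1^n)$ depends only on $X_1^{1:q-1}$ (its complementary piece $X_1^{q:n}$ being held fixed), and for even $j$ the analogous factor depends only on $X_2^{q:n}$. Grouping all the factors that depend on $X_1^{1:q-1}$ into a single $f(X_1^{1:q-1})$ and those depending on $X_2^{q:n}$ into $g(X_2^{q:n})$ yields, up to a normalization constant, the product form $f(X_1^{1:q-1})\,g(X_2^{q:n})$, and the conditional independence follows.

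The main obstacle is the initial chain-rule expansion, since it is there that the Markov assumptions of \eqref{eq:apxc0} are invoked to peel off the ``other-side'' source from each $p(C_j\mid\cdot)$; once that is in place the rest is routine bookkeeping about which factors depend on which free variables. An alternative route would be to induct on $i$, bringing $C_i$ into the conditioning set at each step by invoking the appropriate Markov chain together with the inductive hypothesis, but the direct factorization seems cleaner and avoids nested conditional-probability manipulations.
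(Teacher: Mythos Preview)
Your factorization argument is correct and complete: the chain-rule expansion you write is exactly what the encoder Markov chains in \eqref{eq:apxc0} together with the i.i.d.\ source structure and the independence of $\omega$ give, and once the joint splits as $f(X_1^{1:q-1})\,g(X_2^{q:n})$ with the conditioning variables frozen, the conditional independence is immediate.

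The paper takes the inductive route you mention as the alternative. It fixes $q$ and inducts on $i$: the base case $i=0$ is the i.i.d.\ source structure plus independence from $\omega$, and at step $j$ one pulls $C_j$ out of the conditioning and onto the appropriate side of the mutual information, uses the relevant Markov chain from \eqref{eq:apxc0} to drop it (since $C_j$ is a function of $\omega C_{[1:j-1]}X_{a(j)}^n$ in distribution), and lands on the $(j-1)$-case. Your direct factorization is arguably cleaner and makes the reason for the conditional independence transparent in a single display; the paper's inductive argument stays at the level of mutual-information manipulations and never writes down the joint pmf, which fits the surrounding converse calculations in style. Both proofs use exactly the same hypotheses and neither buys extra generality over the other.
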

\begin{proof}
We prove the lemma by induction on $i$. For $i=0$, we have $I(X_1^{1:q-1};X_{2}^{q:n}|\omega X_1^{q:n}X_2^{1:q-1})=0$ because $X_{[1:2]}^n$ is i.i.d.\ and is independent of the common randomness $\omega$. Suppose that the statement of the lemma holds for $i=j-1$. For $i=j$ we proceed as follows: 
\begin{itemize}
\item If $j$ is odd, we have
\begin{align}
I(X_1^{1:q-1};X_{2}^{q:n}|\omega C_{[1:j]}X_1^{q:n}X_2^{1:q-1})&\le I(C_jX_1^{1:q-1};X_{2}^{q:n}|\omega C_{[1:j-1]}X_1^{q:n}X_2^{1:q-1})\n
                                                                                                      &=   I(X_1^{1:q-1};X_{2}^{q:n}|\omega C_{[1:j-1]}X_1^{q:n}X_2^{1:q-1})=0, 
\end{align}
where in the last step we use the first Markov chain of \eqref{eq:apxc0} and the induction assumption.
\item If $j$ is even, we have
\begin{align}
I(X_1^{1:q-1};X_{2}^{q:n}|\omega C_{[1:j]}X_1^{q:n}X_2^{1:q-1})&\le I(X_1^{1:q-1};C_jX_{2}^{q:n}|\omega C_{[1:j-1]}X_1^{q:n}X_2^{1:q-1})\n
                                                                                                      &=   I(X_1^{1:q-1};X_{2}^{q:n}|\omega C_{[1:j-1]}X_1^{q:n}X_2^{1:q-1})=0,
\end{align}
where in the last step we use the second Markov chain of \eqref{eq:apxc0} and the induction assumption.
\end{itemize}
This completes the induction proof.
\end{proof}

\section{Converse Proof of Theorem \ref{thm:emp}}\label{apx:conv}
Assume $(R_{12},R_{21})$ is a pair of achievable rate. Consider a sequence of coordination codes that achieves $(R_{12},R_{21})$. Take a random variable $Q$ uniform on $[1:n]$ and independent of all other random variables. Define $F_i=C_i X_{1}^{Q+1:n}X_2^{1:Q-1}Q$ for $1\le i\le r$ and $X_i=X_{iQ},Y_i=Y_{iQ}$ for $i=1,2$\footnote{Following the standard definition of empirical coordination code we assume that there is not any common randomness, that is $\omega$ is a constant random variable. See \cite{cuff:thesis} and Remark \ref{re:cor}.}. In the first step of the proof, we show the Markov chain conditions given in the definition of $T(r)$ are satisfied by this choice of auxiliary r.v.'s. The proof of this fact is similar to the one given in the Appendix \ref{apx:c} and hence it is omitted here.
\begin{align}
nR_{12}&\ge \sum_{i:odd}H(C_i)\n&
\ge \sum_{i:odd}I(C_i;X_1^n|C_{[1:i-1]}X_2^n)\n
&= \sum_{i=1}^rI(C_i;X_1^n|C_{[1:i-1]}X_2^n)\label{eq:aA1b}\\
&=I(C_{[1:r]};X_1^n|X_2^n)\n
&=\sum_{q=1}^n I(C_{[1:r]};X_{1,q}|X_{1}^{q+1:n}X_2^n)\n
&=\sum_{q=1}^n I( C_{[1:r]}X_{1}^{q+1:n}X_{2,\sim q};X_{1,q}|X_{2,q})\label{eq:aA0b}\\
&\ge \sum_{q=1}^n I( C_{[1:r]}X_{1}^{q+1:n}X_{2}^{1:q-1};X_{1,q}|X_{2,q})\n
&=nI( C_{[1:r]}X_{1}^{Q+1:n}X_{2}^{1:Q-1};X_{1,Q}|X_{2,Q},Q) \n
&=nI( C_{[1:r]}X_{1}^{Q+1:n}X_{2}^{1:Q-1}Q;X_{1,Q}|X_{2,Q})\label{eq:a3b}\\
&=nI(\F;X_{1}|X_{2})\label{eq:a4b},
\end{align}
where \eqref{eq:aA1b} follows from the Markov chain $C_i-C_{[1:i-1]}X_2^nX_1^n$ for even $i$, \eqref{eq:aA0b} follows from the fact that $X_{1q},X_{2q}$ are i.i.d. repetitions and \eqref{eq:a3b} follows from the fact that $Q$ is independent of $(X_{1,Q},X_{2,Q})$ (See \cite{cuff:thesis}). The inequality $R_{21}\ge I(\F;X_{2}|X_{1})$ can be proved similarly.

The definition of coordination code implies that
\be
\e\tv{\tilde{\mathbf{p}}_{X_{[1:2]}^nY_{[1:2]}^n}-q_{X_{[1:2]}Y_{[1:2]}}}\rightarrow 0.
\ee
This yields that
\be
\tilde{\mathbf{p}}_{X_{[1:2]}^nY_{[1:2]}^n}\rightarrow q_{X_{[1:2]}Y_{[1:2]}}.
\ee
In the other side, it is shown in \cite{cuff:thesis} that $\e\tilde{\mathbf{p}}_{X_{[1:2]}^nY_{[1:2]}^n}=\tilde{p}_{X_{[1:2],Q},Y_{[1:2],Q}}$ where $\tilde{p}$ is the induced pmf by the code. Therefore $\tilde{p}_{X_{[1:2],Q},Y_{[1:2],Q}}$ tends to $q_{X_{[1:2]}Y_{[1:2]}}$. Now the closedness of  the coordination rate region completes the proof.  


\begin{thebibliography}{10}


\bibitem{Wyner}
A. Wyner, \newblock{``The Common Information of Two Dependent Random
Variables,"} \newblock{\em IEEE Trans. Inf. Theory}, vol. 21, no. 2, pp. 163--179 , 1975.

\bibitem{cuff}
P. Cuff,
\newblock{``Communication requirements for generating correlated random variables,"}
\newblock{in \em Proc. IEEE Int. Symp. Inform. Theory (ISIT)}, 2008, pp.1393-1397.

\bibitem{cuff-trans}
P.~Cuff,
\newblock{``Distributed channel synthesis,"}
\newblock{\em IEEE Trans. Inf. Theory}, vol. 59, no. 11, pp. 7071--7096, 2013.

\bibitem{Bennett}
C.~H.~Bennett, P.~W.~Shor, J.~A.~Smolin, and A.~V.~Thapliyal, \newblock{``Entanglement-assisted capacity of a quantum channel and the reverse shan-
non theorem,"} \newblock{\em IEEE Trans. Inf. Theory}, vol. 48, no. 10 , pp. 2637-2655, 2002.

\bibitem{aminzade}
A. Gohari and V. Anantharam,
\newblock{``Generating dependent random variables over networks,"}
\newblock{in \em Proc. IEEE Inform. Theory Workshop(ITW)}, 2011, pp.698-672.

\bibitem{coordination}
P. Cuff, H. Permuter and T. M. Cover,
\newblock{``Coordination capacity,"}
\newblock{\em IEEE Trans. Inf. Theory}, vol. 56, no. 9, pp. 4181--4206, 2010.

\bibitem{Venkat}
V. Anantharam and V. Borkar, \newblock{``Common Randomness and Distributed
Control: A Counterexample,"} \newblock{Systems and Control Letters},
vol. 56, no. 7-8, July 2007.

\bibitem{Harsha}
P. Harsha, R. Jain, D. McAllester, and J. Radhakrishnan. \newblock{``The communication complexity
of correlation"}, \newblock{\em IEEE Conference
on Computational Complexity (CCC)}, 2007.



\bibitem{Bennett2}
C. H. Bennett, I. Devetak, A. Harrow, P. W. Shor, A. Winter, \newblock{``Quantum Reverse
Shannon Theorem,"} ArXiv:0912.5537.


\bibitem{cuff:thesis}
P.~Cuff.
\newblock{``Communication in networks for coordinating behavior,"}
\emph{Ph.D dissertation}, Stanford Univ., CA. Jul. 2009.


\bibitem{ma}
N. Ma and P. Ishwar,
\newblock{``Some results on distributed source coding for interactive function computation,"}
\newblock{\em IEEE Trans. Inf. Theory}, vol. 57, no. 9, pp. 6180--6195 , 2011.


\bibitem{kaspi}
A. H. Kaspi, ``Two-way source coding with a fidelity criterion," \newblock{\em IEEE Trans. Inf. Theory}, vol. 31, no. 6, pp. 735--740, 1985.


\bibitem{cover:book}
T.~M.~Cover and J.~A.~Thomas,
\newblock{``\em Elements of Information Theory,"} Second edition, John Wiley \& Sons, Inc, 2006.

\bibitem{csiszar}
I. Csiszar and J.~Korner,
\newblock{``Information theory: coding theorems for discrete memoryless systems,"} Second edition, Cambridge university press, 2011.





\bibitem{me}
M.~H.~Yassaee, M.~R.~Aref and  A.~Gohari,
\newblock{``Achievability proof via output statistics of random binning,"}
\newblock{in \em Proc. IEEE Int. Symp. Inform. Theory (ISIT)}, 2012, pp. 1044--1048., {\em also available in} ArXiv:1203.0730.

\bibitem{secure}
A.~Gohari, M.~H.~Yassaee  and M.~R.~Aref,
\newblock{``Secure channel simulation,"}
\newblock{in \em Proc.  Inform. Theory workshop (ITW)}, 2012, pp. 411--415.

\bibitem{cuff-isit13}
S. Satpathy and P.~Cuff,
\newblock{``Secure cascade channel synthesis,"}
 \newblock{in \em Proc. IEEE Int. Symp. Inform. Theory (ISIT)}, 2013, pp. 2955--2959.
 
\bibitem{vinod}
V.~M.~Prabhakaran and A.~D.~Sarwate,
\newblock{``Assisted sampling of correlated sources,"}
 \newblock{in \em Proc. IEEE Int. Symp. Inform. Theory (ISIT)}, 2013, pp. 3155--3159.

\bibitem{me3}
F.~Haddadpour, M.~H.~Yassaee, A.~Gohari and M.~R.~Aref,
\newblock{``Coordination via a relay,"} \newblock{in \em Proc. IEEE Int. Symp. Inform. Theory (ISIT)}, 2012, pp. 3048--3052.

\bibitem{farzin13}
F.~Haddadpour, M.~H.~Yassaee, M.~R.~Aref and A.~Gohari,
\newblock{``When is it possible to simulate a DMC channel from another?,"}
\newblock{in \em Proc.  Inform. Theory workshop (ITW)}, 2013.



\bibitem{zhang}
Z.~Zhang, 
\newblock{``Estimating mutual information via Kolmogorov distance,"}
\newblock{\em IEEE Trans. Inf. Theory}, vol. 53, no. 9, pp. 3280--3282, 2007.

\bibitem{elgamal}
A.~ El Gamal and Y.-H.~Kim,
\newblock{``Network information theory,"}
Cambridge, U.K.: Cambridge University Press, 2012.

\end{thebibliography}
\end{document}